\newtheorem{defn}{Definition}[section]
\newtheorem{thm}{Theorem}[section]
\newtheorem{lem}{Lemma}[section]
\definecolor{darkred}{RGB}{150,0,0}
\definecolor{darkgreen}{RGB}{0,150,0}
\definecolor{darkblue}{RGB}{0,0,200}
\newcommand{\f}{\mathbf{f}}
\newcommand{\ellb}{\boldsymbol{\ell}}
\newcommand{\cone}{\operatorname{cone}}
\newcommand{\dthc}{\mathbf{d}_\h}
\newcommand{\las}{\frac{\la}{\sigma}}
\newcommand{\wt}{\tilde{\w}}
\newcommand{\rP}{\xrightarrow{P}}
\newcommand{\Thetab}{\boldsymbol\Theta}
\newcommand{\Phib}{\boldsymbol\Phi}
\newcommand{\qb}{\mathbf{q}}
\newcommand{\Qb}{\mathbf{Q}}
\newcommand{\R}{\mathbb{R}}
\newcommand{\ome}{\upsilon_{f,\x_0}(\beta,t;\g,\h,\vb)}
\newcommand{\smin}{\sigma_{\min}(\A;\Tc_f(\x_0))}
\newcommand{\lims}{\lim_{\sigma\rightarrow 0}}
\newcommand{\phiz}{\phi_0}
\newcommand{\Id}{\mathbf{I}}
\newcommand{\NSE}{\operatorname{NSE}}
\newcommand{\aNSE}{\operatorname{aNSE}}
\newcommand{\wNSE}{\operatorname{wNSE}}
\newcommand{\La}{\Lambda}
\newcommand{\fb}{\mathbf{f}}
\newcommand{\beq}{\begin{equation}}
\newcommand{\eeq}{\end{equation}}
\newcommand{\bea}{\begin{align}}
\newcommand{\eea}{{\end{align}}}
\newcommand{\vp}{\vspace{4pt}}
\newcommand{\temp}{\chi}
\newcommand{\mcsv}{\text{mCSV}}
\newcommand{\Delxf}{{\omega}^2_{f,\x_0}}
\newcommand{\DelxfR}{{\omega}_{f,\x_0}}
\newcommand{\G}{\mathbf{G}}
\newcommand{\X}{\mathbf{X}}
\newcommand{\A}{\mathbf{A}}
\newcommand{\w}{{\mathbf{w}}}
\newcommand{\ww}{{\mathbf{w}}}
\newcommand{\DS}{\mathcal{D}_f(\x_0)}
\newcommand{\x}{\mathbf{x}}
\newcommand{\ub}{\mathbf{u}}
\newcommand{\g}{\mathbf{g}}
\newcommand{\vb}{\mathbf{v}}
\newcommand{\bb}{\mathbf{b}}
\newcommand{\y}{\mathbf{y}}
\newcommand{\s}{\mathbf{s}}
\newcommand{\z}{\mathbf{z}}
\newcommand{\cb}{\mathbf{c}}  
\newcommand{\ab}{\mathbf{a}}
\newcommand{\h}{\mathbf{h}}
\newcommand{\wh}{\hat{\w}}
\newcommand{\Sc}{{\mathcal{S}}}
\newcommand{\Bc}{{\mathcal{B}}}
\newcommand{\Lc}{{\mathcal{L}}}
\newcommand{\Nn}{\mathcal{N}}
\newcommand{\Cc}{\mathcal{C}}
\newcommand{\Pro}{{\mathbb{P}}}
\newcommand{\E}{{\mathbb{E}}}
\newcommand{\paf}{\pa f(\x_0)}
\newcommand{\la}{{\lambda}}
\newcommand{\eps}{\epsilon}
\newcommand{\Tc}{\mathcal{T}}
\newcommand{\pa}{\partial}
\newcommand{\dt}{\text{{dist}}}
\newcommand{\vs}{\vspace}
\newcommand{\nn}{\nonumber}
\begin{document}
\title{\LARGE{\bf{
Isotropically Random Orthogonal Matrices:\\
Performance of LASSO and Minimum Conic Singular Values }}\vs{1pt}}
%
\author{
Christos Thrampoulidis and Babak Hassibi
%
\\
Department of Electrical Engineering, Caltech, Pasadena 
}

%

\maketitle
\begin{abstract} 

Recently, the precise performance of the Generalized LASSO algorithm for recovering structured  signals from compressed noisy measurements, obtained via i.i.d. Gaussian matrices, has been characterized.
The analysis is based on a framework introduced by Stojnic and heavily relies on the use of Gordon's Gaussian min-max theorem (GMT), a comparison principle on Gaussian processes. 
As a result, corresponding characterizations for other ensembles of measurement matrices have not been developed.
In this work, we analyze the corresponding performance of the ensemble of isotropically random orthogonal (i.r.o.) measurements. We consider the constrained version of the Generalized LASSO and derive a sharp characterization of its normalized squared error in the large-system limit. 
When compared to its Gaussian counterpart, our result analytically confirms the superiority in performance of the i.r.o. ensemble. Our second result, derives an asymptotic lower bound on the minimum conic singular values of i.r.o. matrices. This bound is larger than the corresponding bound on Gaussian matrices. To prove our results we express i.r.o. matrices in terms of Gaussians and show that, with some modifications, the GMT framework is still applicable.

\end{abstract}

\section{Introduction}
\subsection{Setup}
Consider the classical problem of signal reconstruction of a \emph{structured} signal $\x_0\in\R^n$ from linear compressed and \emph{noisy} measurements $\y=\A\x_0 + \z\in\R^m$. 
Here $\A$ is the measurement matrix with compression rate $m/n<1$ and $\z$ is the noise vector. A standard method for recovering $\x_0$ is to solve a convex optimization program that enforces our prior knowledge about the distribution of the noise vector and the structure of the unknown signal. We model $\z$ as a zero-mean Gaussian vector with covariance matrix $\sigma^2\mathbf{I}$. Also assume $f:\R^n\rightarrow\R$ to be a \emph{convex} function that induces the structure of $\x_0$, e.g. $\ell_1$-norm for sparsity, nuclear norm for low-rankness, etc.. 
A popular algorithm in this direction is the Generalized $\ell_2^2$-LASSO that solves
\begin{align}\label{eq:intro_GL}
\hat\x^{\sigma} := \arg\min_{\x} \frac{1}{2}\|\y-\A\x\|_2^2 + \la f(\x),
\end{align}
for a regularization parameter $\la\geq 0$. We measure the performance of \eqref{eq:intro_GL} with the Normalized Squared Error (NSE):
\begin{align}\label{eq:intro_NSE}
\NSE(\sigma) := \| \hat\x^{\sigma} - \x_0 \|_2^2/\sigma^2,
\end{align}
and are interested in characterizing its behavior as a function of $n$, $m$, $f$, $\x_0$, $\sigma$ and $\lambda$.
To get a handle on this question, it is common to model the sampling matrix $\A$ as chosen at random from some ensemble. In particular,  two prominent models for the measurement matrix are:

\vspace{2pt}

\noindent(a)~\underline{Gaussian}: The entries of $\A$ are i.i.d. standard normal. This assumption is  primarily motivated by: (i) the  well-understood and remarkable properties of the gaussian ensemble, (ii) 
the so-called \emph{universality} property, i.e. many results turn out to hold true for matrices with i.i.d. entries drawn from a wide class of probability distributions.
\vspace{2pt}

\noindent(b)~\underline{Isotropically Random Orthogonal (i.r.o.)}: 
The matrix $\A$ is sampled uniformly at random from the manifold of row-orthogonal matrices satisfying $\A\A^T = \Id_m$.
Such orthogonal matrices are occasionally referred to as being ``Haar distributed". Matrices with orthogonal rows are often preferred in practice because their condition number is one and the do not amplify the noise. As a result they have superior noise performance, something we shall also observe in this paper too. Furthermore, certain classes of orthogonal matrices, such as Fourier, discrete-cosine and Hadamard allow for fast multiplication and reduced complexity.

%
%
%
\subsection{Background}
Understanding the reconstruction performance of \eqref{eq:intro_GL} has been a subject that has attracted enormous research attention over the past two decades or so. 
However, it is only recently that precise analysis in the noisy case has been developed.

\subsubsection{Noiseless Case}
In the noiseless case it has been shown \cite{Cha} that the unique solution to
$\min_{\{\x| \y=\A\x\}} f(\x)$  is the true vector $\x_0$ if  the number of measurements $m$ satisfy
\begin{align}\label{eq:intro_m>D}
m \gtrsim	\Delxf.
\end{align}
Here, $\Delxf$ is a geometric measure of the complexity of $f$ and $\x_0$, defined in Section \ref{sec:res}. \eqref{eq:intro_m>D} is \emph{precise} in the sense that the same number of measurements is also necessary \cite{TroppEdge}. This result is \emph{universal} over the measurement matrix $\A$ over both the Gaussian and the i.r.o. ensemble:  $\A$ appears in the optimality conditions 
only through its nullspace, which in both cases is an isotropically random subspace in $\R^n$ of dimension $n-m$.
\subsubsection{Noisy Case}
Most results in the noisy case are order-wise in the sense that they hold only up to unknown numerical constants. 

\vspace{3pt}

\noindent\underline{Gaussian Ensemble}:
Precise bounds on the NSE of the Generalized LASSO with Gaussian measurements have appeared only very recently. To the best of our knowledge, the first such results appear in \cite{DonohoSensitivity,MontanariLASSO} when $\ell_1$ regularization is used in \eqref{eq:intro_GL}. More recently, Stojnic introduced in \cite{StoLASSO} a novel framework, which is based on the use of Gordon's Gaussian min-max Theorem (GMT)  \cite[Lem.~3.1]{Gor}. The framework has proved to be powerful (also, see \cite{tight}) and has resulted  in simple, yet precise bounds on the $\NSE$ of the \emph{Generalized} LASSO \cite{StoLASSO,OTH,ICASSP,ISIT1}. Those results resemble \eqref{eq:intro_m>D} for the noiseless case. To get a flavor, consider the constrained version\footnote{From Lagrange duality there exists value of $\la$ in \eqref{eq:intro_GL} such that the two versions are equivalent.} of \eqref{eq:intro_GL} (C-LASSO) which solves 
\begin{align}\label{eq:intro_CL}
\hat\x := \arg\min_\x \|\y-\A\x\|_2 ~~\text{subject to}~~ f(\x)\leq f(\x_0).
\end{align}
It was shown in \cite{StoLASSO,OTH} that the NSE of \eqref{eq:intro_CL} under Gaussian measurements is upper bounded by
\begin{align}\label{eq:intro_flavor}
 \frac{\Delxf}{m-\Delxf}.
\end{align}
The bound is \emph{precise} since it is shown to be achieved with equality in the limit $\sigma\rightarrow 0$.

\vspace{3pt}

\noindent\underline{I.R.O. Ensemble }:
Unlike the noiseless case, in the noisy setting i.r.o. matrices exhibit different recovery performance than that of Gaussians. Using the replica method from statistical physics  and through extensive simulation results, \cite{replica1,replica2} derive expressions that characterize the NSE of \eqref{eq:intro_GL} and report that orthogonal constructions provide a \emph{superior} performance compared to their Gaussian counterparts. As mentioned in  \cite{replica1}, even though it provides a powerful tool for tackling hard analytical problems, the replica method still lacks mathematical rigor in some parts \cite{replica1}. As a follow up to these reports, and also driven by the fact that orthogonal constructions are easier to implement
%
%
 in practical applications \cite{replica2}, it is of interest to prove precise bounds on the achieved NSE; ones that would resemble those of \cite{StoLASSO,OTH,ISIT1} for Gaussian constructions. Towards this direction, Oymak and Hassibi showed in \cite{ACase} that the noisy performance of i.r.o. matrices is at \emph{at least as} good as that of Gaussians. To conclude this, they proved that the \emph{minimum conic singular value} ($\mcsv$) of the former can be no smaller than that of the latter. $\mcsv$s appears naturally as a measure of noise robustness performance (e.g.\cite[Cor.~3.3]{Cha}), thus,  the achieved NSE of i.r.o. can be no worse than that of Gaussians.
%
%
Adding to this, \cite{ACase} \emph{conjectures} a formula to bound the NSE of \eqref{eq:intro_CL} when $\A$ is i.r.o..

\subsection{Contribution}
We prove in Theorem \ref{thm:main} that when the measurement matrix $\A$ is i.r.o., then the NSE of \eqref{eq:intro_CL} in the high-SNR regime ($\sigma\rightarrow 0$) behaves precisely as\footnote{\eqref{eq:intro_flavor_res} holds for i.r.o. matrix $\A$ scaled such that $\A\A^T=n\Id_m$. This is to allow for a fair comparison with i.i.d. standard Gausian matrices for which $\E[\A\A^T]=n\Id_m$.}:
\begin{align}\label{eq:intro_flavor_res}
\frac{\Delxf}{m-\Delxf}\frac{n-\Delxf}{n}.
\end{align}
As is the case for the Gaussian ensemble (cf. \eqref{eq:intro_flavor}), we conjecture this to be the worst-case value of the NSE over all $\sigma$. 
Since $n-\Delxf<n$, when compared to \eqref{eq:intro_flavor}, our result implies the superiority  in performance of  the i.r.o. ensemble  when compared to the Gaussian one. 
 In particular, this  establishes rigorously the conjecture raised in \cite{ACase}. Our second result in Theorem \ref{thm:main2} derives a high-probability lower bound on the $\mcsv$ of i.r.o. matrices. The bound is seen to exceed the corresponding well-known bound for Gaussian matrices.

\subsection{Approach}
The set of techniques
available for dealing with i.r.o. matrices is limited compared to the variety of methods available for working with Gaussian matrices. 
Nonetheless, we are able to prove \eqref{eq:intro_flavor_res} based on a modification of the same framework \cite{tight} that led to corresponding results for the Gaussian case \cite{StoLASSO,OTH,ISIT1,ICASSP,Allerton}. As mentioned, the framework builds upon the GMT, a comparison lemma on Gaussian processes. In particular, \cite{StoLASSO,OTH} use the fact that $\|\ab\|_2=\max_{\|\ub\|\leq 1}\ub^T\ab$ to write \eqref{eq:intro_CL} as:
\begin{align}\label{eq:intro_trick}
\min_\x\max_{\|\ub\|_2\leq 1} \ub^T(\y-\A\x) ~~\text{subject to}~~ f(\x)\leq f(\x_0),
\end{align}
to which GMT is directly applicable.
 In contrast, when $\A$ is i.r.o., it is not at all obvious how to use GMT. To start with, there is no Gaussian matrix. The key idea here is to equivalently express an i.r.o. matrix  as:
\begin{align}
(\G\G^T)^{-1/2}\G,\nn
\end{align}
with $\G\in\R^{m\times n}$ having entries i.i.d. standard Gaussian and where $(\G\G^T)^{-1/2}$ is the inverse of the square-root of the positive definite (with probability one) $m\times m$ matrix $\G\G^T$.
Substituting in \eqref{eq:intro_CL}, the LASSO objective is closer but not yet quite of the form required by GMT.
In particular, the slick trick that led to \eqref{eq:intro_trick} is not enough here and additional ideas are required.  Using these we are able to bring \eqref{eq:intro_CL} into the desired format; the argument is sketched is Section \ref{sec:massage}.
Once this is done, what remains is to apply the framework of \cite{tight} to conclude with the desired.

\section{Result}\label{sec:res}

\subsection{ Setup}

Let $\x_0\in R^n$, $\y= \A\x_0 + \sigma\vb\in \R^m$ and \emph{convex} $f:\R^n\rightarrow\R$. The constrained Generalized LASSO (C-LASSO) solves \eqref{eq:intro_CL}.
The reconstruction vector $\hat\x$ depends explicitly on $\A, f, \x_0$, and, implicitly on $\sigma, \vb$ through the measurement vector $\y$.
Define the NSE of \eqref{eq:intro_CL} as in \eqref{eq:intro_NSE}.

\subsubsection{Assumptions}\label{sec:ass}
The matrix $\A\in\R^{m\times n},m\leq n$ is modeled to have orthogonal rows $\A\A^T = \Id_m$, and the joint probability density of its elements remains unchanged when $\A$ is pre- and post- multiplied  by  any orthogonal matrices $\Phib\in\R^{m\times m}, \Thetab\in\R^{n\times n}$, i.e., $p(\Phib\A\Thetab) = p(\A)$. We say that $\A$ is \emph{i.r.o.}\footnote{
Different terminologies that appear in the literature to describe the same distribution include ``random $m$-frames in $\R^n$" and ``distributed according to the Haar measure on the Stiefel manifold", see \cite{tropp2012comparison}.
}. The noise vector $\vb$ has entries i.i.d. standard normal $\Nn(0,1)$, $f:\R^n\rightarrow\R$ is assumed convex and continuous, and,
$\x_0$ is not a minimizer of $f$. Popular regularizers include the $\ell_1$-norm, nuclear-norm, $\ell_{1,2}$-norm etc. (please refer to \cite{Cha,TroppEdge} for further examples).

\subsubsection{Large system limit}\label{sec:large}
Our results hold in an asymptotic regime in which the problem dimensions grow to infinity. We consider  a sequence of problem instances $\{\A,\vb,\x_0,f\}_{m,n}$ as in \eqref{eq:intro_CL} indexed by $m$ and $n$ such that both $m,n\rightarrow\infty$. In each problem instance, $\A, \vb$ and $f$ satisfy the assumptions of Section \ref{sec:ass}. Furthermore, $\hat\x$ and $\NSE(\sigma)$ denote the output of \eqref{eq:intro_CL} and the corresponding NSE. To keep notation simple, we avoid introducing explicitly the dependence of variables on the problem dimensions $m,n$. 

\subsubsection{NSE}
Define the \emph{worst-case} and \emph{asymptotic} NSE as
$
\wNSE := \sup_{\sigma>0} \NSE(\sigma),
$ 
and
$
\aNSE := \lim_{\sigma\rightarrow 0} \NSE(\sigma),
$ respectively.
The importance of studying the $\aNSE$ stems from the fact that $\wNSE=\aNSE$ in several cases (including C-LASSO for Gaussian measurements, also see \cite{ISIT1,OTH,Yu}). Theorem \ref{thm:main} precisely characterizes $\aNSE$. We conjecture that the same expression predicts $\wNSE$.

\subsubsection{Terminology}

\begin{defn}[Tangent cone] Consider $f:\R^n\rightarrow\R$, $\x_0\in\R^n$ and its set of descent directions  $\DS:=\{\w\in\R^n\big| f(\x_0+\w)\leq f(\x)\}$. The tangent cone of $f$ at $\x_0$ is defined as
$T_f(\x_0):=\operatorname{Cl}(\operatorname{cone}(D_f(\x))\nn
$.
, where $\operatorname{cone}(\cdot)$ and $\operatorname{Cl}(\cdot)$ return the conic hull and the closure of a set, respectively.
\end{defn}
\begin{defn}[Gaussian width] Let $\h\in\R^n$ have i.i.d. standard normal entries. The Gaussian width of the tangent cone of $f$ at $\x_0\in\R$ is defined as,
\beq
\DelxfR =\E_\h\Big[\sup_{\substack{\w\in\Tc_f(\x_0) , \|\w\|_2=1 }}~\h^T\w\Big]\nn.
\eeq
\end{defn}
The Gaussian width is a geometric measure of the size of the tangent cone. It is similarly defined for any set; the definition above is specific to our application. 
 Please refer to \cite{Cha,TroppEdge} for detailed discussions on its role in asymptotic convex geometry and on its properties.
 We also need the definition of the \emph{minimum conic singular value} ($\mcsv$) of a matrix $\A$. This can be defined for any cone in $\R^n$. To avoid introducing extra notation, we only define it with respect to the tangent cone of a function.
\begin{defn}[Minimum conic singular value] Let $\A\in\R^{m\times n}$. The minimum conic singular value of $\A$ with respect to the tangent cone of $f$ at $\x_0\in\R^n$ is defined as,
\beq\nn
\smin = \inf_{\substack{\x\in\Tc_f(\x_0) , \|\x\|_2=1}}\|\A\x\|_2.
\eeq
\end{defn}
Note that $\sigma_{\min}(\A;\R^n)$ is the minimum singular value of $\A$.

\subsection{Results}\label{sec:main}

Our results hold in the asymptotic \emph{linear regime}, where $m,n$ and $\Delxf$ all grow to infinity such that $m/n\rightarrow\delta\in(0,1)$ and $(1-\eps)m>\Delxf>\eps m$ for some constant $\eps>0$. 
In particular, assume the setup as in Section \ref{sec:large} under this linear regime. Also, let $\A\in\R^{m\times n}$ be distributed i.r.o.  
\begin{thm}[C-LASSO]\label{thm:main}
Consider \eqref{eq:intro_CL} and let
$$\aNSE:=\lim_{\sigma\rightarrow 0}{\|\hat\x-\x_0\|_2^2}/{\sigma^2}.$$
The following limit holds in probability
\begin{align}
\lim_{n\rightarrow\infty} \frac{\aNSE}{n}= \frac{\Delxf}{m-\Delxf}\frac{n-\Delxf}{n}. \nn
\end{align}
\end{thm}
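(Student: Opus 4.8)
The plan is to follow the Gaussian-comparison route sketched in Section~\ref{sec:massage}, but carefully tracking the extra factor introduced by the normalization $\A\A^T=n\Id_m$ and by the pre-conditioner $(\G\G^T)^{-1/2}$. First I would write the i.r.o.\ matrix as $\A = \sqrt{n}\,(\G\G^T)^{-1/2}\G$ with $\G$ having i.i.d.\ $\Nn(0,1)$ entries, substitute into \eqref{eq:intro_CL}, and introduce the change of variables $\w = \x-\x_0$, so that the C-LASSO residual becomes $\|\sigma\vb - \A\w\|_2$ subject to $\w\in\Tc_f(\x_0)$ (after the usual localization argument that, as $\sigma\to 0$, the optimal $\w$ stays in the tangent cone up to $o(\sigma)$ corrections). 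Using $\|\ab\|_2=\max_{\|\ub\|_2\le 1}\ub^T\ab$ as in \eqref{eq:intro_trick}, the objective becomes $\max_{\|\ub\|_2\le1}\ub^T(\sigma\vb - \sqrt{n}(\G\G^T)^{-1/2}\G\w)$; the term $\ub^T(\G\G^T)^{-1/2}\G\w$ is \emph{not} bilinear in a single Gaussian matrix, so the main preparatory step is to absorb $(\G\G^T)^{-1/2}$ — e.g.\ by substituting $\ub\mapsto (\G\G^T)^{1/2}\ub$ (noting $\G\G^T$ concentrates around $n\Id_m$, so the constraint $\|\ub\|_2\le1$ deforms into an approximately Euclidean ball of radius $\approx 1$ up to vanishing corrections), which leaves a clean bilinear form $\ub^T\G\w$ to which GMT applies. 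This is the step I expect to be the main obstacle: controlling the deformation of the $\ub$-constraint set uniformly, and showing the resulting error terms are $o(\sigma)$ (or vanish in the normalized limit), is exactly where the i.r.o.\ analysis departs from the Gaussian one and where the spectral concentration of $\G\G^T$ must be invoked quantitatively.

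Once the problem is in bilinear form, I would invoke Gordon's GMT to pass to the \emph{auxiliary optimization} (AO): a problem of the shape $\min_{\w\in\Tc_f(\x_0)}\max_{\ub}\ \|\w\|_2\,\g^T\ub + \|\ub\|_2\,\h^T\w + (\text{the }\sigma\vb\text{ term})$, with $\g\in\R^m$, $\h\in\R^n$ independent standard Gaussian. Scalarizing the $\ub$-maximization (optimizing over the direction and magnitude of $\ub$) collapses the AO to a one-dimensional deterministic optimization over $\|\w\|_2$, whose cost involves $\sigma^2 m$, the squared distance $\mathrm{dist}^2(\h,\Tc_f(\x_0)^\circ)$ and the width $\omega_{f,\x_0}$. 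Here I would use the standard facts (as in \cite{OTH,tight}) that $\E\,\mathrm{dist}^2(\h,\Tc_f(\x_0)^\circ)$ and $\omega_{f,\x_0}^2$ are both essentially $\Delxf$ in the linear regime and concentrate, and that $\|\g\|_2^2\approx m$. Solving the resulting scalar problem and letting $\sigma\to 0$ should yield $\aNSE/n \to \tfrac{\Delxf}{m-\Delxf}$ times the correction factor $\tfrac{n-\Delxf}{n}$ coming from the $(\G\G^T)^{-1/2}$ conditioning; tracking precisely where the $(n-\Delxf)/n$ enters (I expect it to come from re-expressing the constraint deformation in terms of $\|\w\|_2^2$ versus the residual, i.e.\ from the fact that the effective noise seen through the orthogonal projection is smaller) is the bookkeeping core of the argument.

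Finally, to upgrade the GMT bound from an inequality to an \emph{equality} in probability, I would use the convexity-based tightening of the framework \cite{tight}: the C-LASSO objective is jointly convex in $\x$ and concave (linear) in $\ub$ over convex sets, so Gordon's comparison is two-sided and the primary optimization concentrates around the AO value; the standard deviation-bound arguments then give the matching upper and lower bounds. The $\sigma\to0$ limit must be taken with care — I would first establish the characterization of $\NSE(\sigma)/n$ for fixed small $\sigma$ (again via the AO), then take $\sigma\to0$ inside, justified by monotonicity/continuity of the scalar AO cost in $\sigma$ exactly as in the Gaussian treatment. The non-routine pieces are thus: (i) the reduction to a GMT-amenable bilinear form in the presence of $(\G\G^T)^{-1/2}$, and (ii) verifying that all approximation errors from the spectral concentration of $\G\G^T$ are negligible in the normalized large-system limit; everything after the AO is reduction is parallel to \cite{OTH,tight}.
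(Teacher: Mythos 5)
Your high-level plan (express $\A$ as $(\G\G^T)^{-1/2}\G$, dualize the norm, apply the CGMT framework of \cite{tight}, scalarize the AO, take $\sigma\to0$) matches the paper's architecture, but the step you yourself flag as the main obstacle is resolved incorrectly, and this is a genuine gap. You propose to substitute $\ub\mapsto(\G\G^T)^{1/2}\ub$ and argue that the deformed constraint set is ``an approximately Euclidean ball up to vanishing corrections'' because $\G\G^T$ concentrates around $n\Id_m$. In the linear regime $m/n\to\delta\in(0,1)$ this is false: the spectrum of $\G\G^T/n$ spreads over the Marchenko--Pastur bulk $[(1-\sqrt\delta)^2,(1+\sqrt\delta)^2]$, so the constraint set is an ellipsoid whose aspect ratio stays bounded away from $1$, and the error is order one, not vanishing. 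This is not a bookkeeping issue one can absorb: if $\G\G^T$ were effectively $n\Id_m$, the i.r.o.\ and Gaussian ensembles would give the \emph{same} NSE and the factor $(n-\Delxf)/n$ could never appear --- which is consistent with your admitted uncertainty about where that factor enters. Moreover, even granting the approximation, the constraint set after your substitution depends on $\G$, so the GMT (which requires deterministic constraint sets $\Sc_\ab,\Sc_\bb$ alongside a single bilinear term $\bb^T\G\ab$) is not applicable.

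The paper's resolution is exact rather than approximate and is the genuinely new idea you are missing. First, the noise is absorbed into the Gaussian structure via $\vb\sim\A\qb$ with $\qb\in\R^n$ standard normal, so the residual becomes $(\G\G^T)^{-1/2}\G(\w-\sigma\qb)$ (Lemma~\ref{lem:ir2}); without this the $\sigma\vb$ term sits outside the preconditioned form and obstructs the dualization. Second, one uses the \emph{identity} $\|(\G\G^T)^{1/2}\bb\|=\|\G^T\bb\|$ to rewrite the constraint as $\|\G^T\bb\|\le1$, and then expresses its indicator through the conjugate $\delta(\ab\,|\,\Bc^{n-1})=\sup_{\ellb}\ab^T\ellb-\|\ellb\|$. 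This introduces an auxiliary variable $\ellb\in\R^n$ so that $\G$ appears only through the bilinear form $\bb^T\G(\cb-\ellb)$ over deterministic (bounded, convex) sets, with the coupling term $\|\w-\sigma\qb-\ellb\|$ playing the role of $\psi$ in \eqref{eq:Phi}. The resulting AO is consequently not the one-dimensional problem in $\|\w\|_2$ you describe but a saddle problem in the pair $(\beta,t)$ after scalarizing over $\ellb$ and $\w$; the factor $(n-\Delxf)/n$ emerges from solving that saddle point (namely $\beta_*^2=(m-\omega^2)/(n-m)$, $t_*=m$), not from a deformation of the $\ub$-ball. Your remaining steps (convexity-based two-sided comparison, concentration of $\|\g\|^2$, $\|\h\|^2$ and of the distance to the cone, monotonicity in $\sigma$) are in line with the paper, but they cannot be executed until the reduction to a GMT-amenable form is done exactly as above.
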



\vp

\begin{thm}[Minimum Conic Singular Value]\label{thm:main2}
Denote
$$
\temp := \frac{\sqrt{n-\Delxf}}{\sqrt{n-m}}\frac{\sqrt{m}}{n}-\frac{\DelxfR}{n}
$$
and $\rho:=\DelxfR/\chi+n-m$.
For  all $\zeta>0$, with probability 1 in the limit $n\rightarrow\infty$, $\smin$ is lower bounded by
\begin{align}\nn
\sqrt\frac{m+\rho^2\temp^2-2\rho\temp\DelxfR-\rho\temp^2(n-m)}{m+\rho} - \zeta.
%
%
\end{align}
\end{thm}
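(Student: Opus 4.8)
The plan is to follow the same strategy that underlies Theorem~\ref{thm:main}: write the i.r.o.\ matrix $\A$ (scaled so that $\A\A^T = n\Id_m$) as $\sqrt{n}\,(\G\G^T)^{-1/2}\G$ with $\G$ having i.i.d.\ standard normal entries, substitute this into the variational definition $\smin=\inf_{\x\in\Tc_f(\x_0),\|\x\|_2=1}\|\A\x\|_2$, and massage the resulting min-max problem into a form to which Gordon's GMT applies. Concretely, I would first use $\|\A\x\|_2 = \max_{\|\ub\|_2\le 1}\ub^T\A\x$ to get a min-max problem, but since the whitening factor $(\G\G^T)^{-1/2}$ obstructs direct application of GMT, I would introduce an auxiliary variable --- as in Section~\ref{sec:massage} --- to decouple $\G$ from $(\G\G^T)^{-1/2}$. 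The cleanest route is to write $\ub^T(\G\G^T)^{-1/2}\G\x$ by setting $\vb = (\G\G^T)^{-1/2}\ub$ (equivalently $\ub = (\G\G^T)^{1/2}\vb$, so the constraint $\|\ub\|_2\le1$ becomes $\vb^T\G\G^T\vb\le1$), which turns the objective into $\vb^T\G\x$ --- bilinear in the single Gaussian matrix $\G$ --- at the cost of a quadratic constraint on $\vb$ that itself involves $\G$. This is the same difficulty that arises in the proof of Theorem~\ref{thm:main}, so I would handle the quadratic constraint by the same device used there (a Lagrangian relaxation of $\vb^T\G\G^T\vb\le1$, or a second round of the ``$\|\cdot\|_2 = \max \ub^T(\cdot)$'' trick applied to $\|\G^T\vb\|_2$), producing a Gaussian process to which GMT can be applied directly.

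Having reached a GMT-amenable primary optimization, I would pass to the Auxiliary Optimization (AO) problem: replace $\G$ by the two independent Gaussian vectors $\g\in\R^m$ and $\h\in\R^n$ appearing in the GMT bound, obtaining a scalarized deterministic optimization once the inner maximizations over directions are carried out. The key geometric quantities that survive concentration are $\|\g\|_2\approx\sqrt m$, $\|\h\|_2\approx\sqrt n$, $\dt(\h,\Tc_f(\x_0))\approx\sqrt{n-\Delxf}$ and $\sup_{\w\in\Tc_f(\x_0),\|\w\|_2=1}\g^T\w$-type terms that reduce to $\DelxfR$. Substituting these asymptotics and simplifying the scalar optimization over the remaining Lagrange/slack parameters (one of which will be identified with $\rho = \DelxfR/\chi + n-m$), I expect the lower bound on $\smin$ to collapse exactly to the stated expression $\sqrt{(m+\rho^2\temp^2-2\rho\temp\DelxfR-\rho\temp^2(n-m))/(m+\rho)}-\zeta$, with $\temp$ playing the role of the optimal value of the ``$\vb$-scaling'' parameter. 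The $-\zeta$ slack absorbs the $o(1)$ concentration errors and the gap between the AO value and $\smin$ that GMT guarantees in one direction.

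The main obstacle, as in Theorem~\ref{thm:main}, is the first step: bringing $\|(\G\G^T)^{-1/2}\G\x\|_2$ into a genuine bilinear Gaussian form without leaving a residual coupling that GMT cannot absorb. The whitening matrix $(\G\G^T)^{-1/2}$ is a nonlinear function of $\G$, so the naive substitution only shifts the problem into a constraint set that depends on $\G$; making the GMT comparison rigorous then requires showing that this $\G$-dependent constraint can be replaced, in the asymptotic limit, by a deterministic one (e.g.\ $\|\vb\|_2^2$ lies in a shrinking interval around $1/n$ times a Marchenko--Pastur-type constant, or the Lagrangian dual renders the dependence benign). I would invoke the machinery of \cite{tight} for the convergence of the AO to the primary problem, and carefully verify the convexity/compactness hypotheses needed there; this is where most of the technical work lies. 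Once the reduction is in place, the remaining scalar calculus is routine and yields the claimed bound, which one then checks numerically exceeds the classical Gaussian bound $\sqrt{m}-\DelxfR$ (after the $\A\A^T=n\Id_m$ rescaling) for all $\delta\in(0,1)$ and all feasible $\Delxf$.
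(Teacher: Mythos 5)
Your proposal follows essentially the same route as the paper's proof of Theorem~\ref{thm:main2}: write $\A=(\G\G^T)^{-1/2}\G$, absorb the whitening into the constraint $\|\G^T\bb\|_2\le 1$, dualize that constraint via an auxiliary variable $\ellb$ to reach a bilinear Gaussian form, apply GMT, and then scalarize and concentrate ($\|\g\|\to\sqrt m$, $\|\h\|\to\sqrt n$, $\max_{\x\in\Cc}\h^T\x\to\DelxfR$). One caveat: because $\Cc=\Tc_f(\x_0)\cap\Sc^{n-1}$ is non-convex, the convex framework of \cite{tight} whose hypotheses you propose to verify does not apply here; the paper instead uses a single one-sided application of Gordon's original GMT together with the min--max inequality, which suffices since only a lower bound on the optimal cost is needed (and note the normalization is $\A\A^T=\Id_m$ here, not $n\Id_m$).
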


\subsection{Remarks}\label{sec:rem}
\subsubsection{C-LASSO}
$\\$
~~~\emph{Comparison to Gaussian case:}
For an i.i.d Gaussian matrix with entries of variance $1/n$, it has been shown in \cite{OTH} that $\aNSE/n\approx \Delxf/(m-\Delxf)$. This is strictly greater than the expression of Theorem \ref{thm:main}, proving that the i.r.o. ensemble has strictly superior noise performance. Note that when $\Delxf<m\ll n$, the  two formulae are close to each other. This agrees with the fact that the entries of a very ``short" i.r.o. matrix are effectively independent for many practical purposes \cite{jiang2006many}. Finally, observe that both bounds approach infinity as the number of measurements $m$ approaches $\Delxf$. Of course, this agrees with the phase transition in the noiseless case (cf. \eqref{eq:intro_m>D}) which is same for both ensembles.

\vspace{2pt}

\emph{Interpretation}:
As seen the formula of Theorem \ref{thm:main} closely resembles the corresponding results for the Gaussian case. Thus, most of the remarks made for the Gaussian case (e.g. \cite{ISIT1}) regarding the role of the involved parameters, the geometric nature of the bound and its generality directly transfer to our case. It is useful to remark that$\Delxf$ admits precise high-dimensional approximations either in closed-form, or numerically tractable, for a number of useful instances of $f$ and $\x_0$, e.g.\cite{Cha,TroppEdge,OTH}. For a mere illustration, for $f=\|\cdot\|_1$ and $\x_0$ $k$-sparse signal, $\Delxf\lesssim 2k(\log(n/k) + 1)$.

\vspace{2pt}

\emph{$\wNSE$}: We conjecture that $\wNSE=\aNSE$. In this case, Theorem \ref{thm:main} would prove a tight upper bound on $\NSE(\sigma)$ for any $\sigma$. Simulation results in Figure \ref{fig:NSE} support the claim.

 \begin{figure}[!t]
\centering
  \includegraphics[width=1\linewidth]{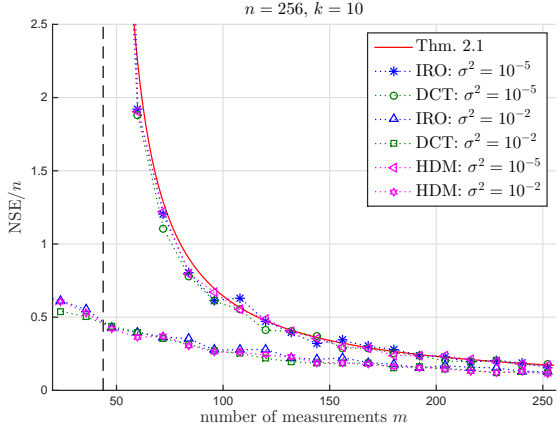}
    \label{fig:NSE}
  \caption{\footnotesize{Illustration of Theorem \ref{thm:main} for $f=\|\cdot\|_1$ and $\x_0\in\R^{256}$ a $10$-sparse vector. Simulation results support the claim that $\aNSE=\wNSE$. Furthermore, randomly sampled Discrete Cosine Transform (DCT) and Hadamard (HDM) matrices appear to have same $\NSE$ performance as i.r.o. matrices. Measured values of the $\NSE$ are averages over 25 realizations. }}
  \label{fig:NSE}
\end{figure}

\vspace{2pt}

\emph{Universality}: \cite{ACase} shows numerical evidence that partial Discrete Cosine Transform (DCT) matrices obtained by \emph{randomly} sampling $m$ rows of the DCT matrix without replacement, and similarly sampled Hadamard (HDM) matrices exhibit the same $\NSE$ performance as the i.r.o. ensemble. Our simulations in Figure \ref{fig:NSE} confirm this and, thus, Theorem \ref{thm:main} appears to predict the NSE of random DCT and HDM matrices as well. Understanding of the behavior of such ensembles is of great practical importance due to their favorable attributes\cite{replica2}.

 \begin{figure}[!t]
\centering
  \includegraphics[width=1\linewidth]{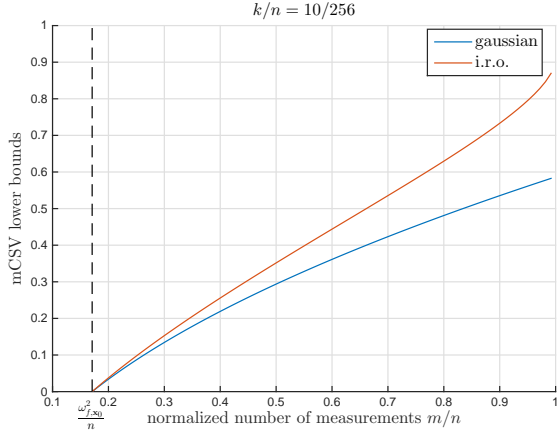}
    \label{fig:mcsv}
  \caption{\footnotesize{Illustration of Theorem \ref{thm:main2}. The bound exceeds the corresponding bound for Gaussian matrices. We have chosen $f=\|\cdot\|_1$ and $\x_0\in\R^{n}$ a $k$-sparse vector.  }}
  \label{fig:mcsv}
\end{figure}

\vp

\subsubsection{Minimum conic singular value}
$\\$
\emph{Comparison to Gaussian case:} A standard application of GMT shows that the $\mcsv$ of a matrix with i.i.d. entries $\Nn(0,1/n)$ is lower bounded by $(\sqrt{m}-\DelxfR)/\sqrt{n}$, e.g. \cite[Cor.~3.3]{Cha}. The bound of Theorem \ref{thm:main2} on the $\mcsv$ of an i.r.o. exceeds that, which is a strong indication that i.r.o. matrices are \emph{strictly} better conditioned than corresponding Gaussian ones. See Figure \ref{fig:mcsv} for an illustration. 

\emph{Sanity test:} When $\Delxf< m\ll n$, the entries of the i.r.o. behave almost as if they are independent\cite{jiang2006many}. As expected, then, in this regime the bound of Theorem \ref{thm:main2} approaches $(\sqrt{m}-\DelxfR)/\sqrt{n}$, which coincides with the bound on Gaussians. On the other hand, when $m=n$, it can be seen that, as expected, the expression of Theorem \ref{thm:main2} approaches one.


\emph{Tightness:} 
Theorem \ref{thm:main2} provides no guarantees on the exactness of the derived lower bound. This is also the case for the corresponding result on the $\mcsv$ of Gaussian matrices. Proving (or disproving) the exactness of the bounds is an open research problem.

\emph{General cones:} Of course, the bound of Theorem \ref{thm:main2} holds for the minimum singular value of $\A$ with respect to \emph{any} cone, not necessarily a tangent cone or even a convex cone. One just needs to replace $\DelxfR$ with the Gaussian width of the corresponding cone. Also, a non-asymptotic version of Theorem \ref{thm:main2} is possible, and will be included in the extended version of the paper.

%

\section{Proof Outline}\label{sec:proof}
%

In this section, we outline the main steps of the proof. We focus on Theorem \ref{thm:main}. The proof of Theorem \ref{thm:main2} follows along the same ideas and is only briefly discussed in the Appendix. Due to space considerations we limit our attention to showing the steps and modifications required to apply GMT in the case of i.r.o. matrices. In contrast to this part of the proof, which involves several new ideas, after we have transformed the problem into one where the GMT framework is applicable, then the rest is along the lines of \cite{StoLASSO,OTH,ICASSP,ISIT1}. This latter part and some technical details not discussed here are deferred to the Appendix and the extended version of the paper.
We re-write \eqref{eq:intro_CL} by changing the decision variable to be the error vector $\ww:=\x-\x_0$:
\begin{align}\label{eq:GL}
\hat\ww := \min_{\w\in\DS}\|\A\ww - \sigma\vb \|_2.
\end{align}
We evaluate the limiting behavior $\lim_{\sigma\rightarrow 0}\|\hat\ww\|^2/\sigma^2$. Throughout, we write $\|\cdot\|$ instead of $\|\cdot\|_2$.

 
\subsection{Formulation in terms of Gaussians}\label{sec:ir2G}
We begin with a simple lemma that provides a simple characterization of i.r. orthogonal matrices in terms of Gaussians. Let $\X^{1/2}$ denote a square-root of a matrix $\X\in\R^{m\times m}$, and $\X^{-1/2}$ its inverse (if it exists). Also, for random variables $x$ and $y$ with the same distribution, we write $x \sim y$.
\begin{lem}[I.r. orthogonal matrices]\label{lem:ir1}
Let $\G\in\R^{m\times n}$ have entries i.i.d. $\Nn(0,1)$. Then the matrix
$
\A = (\G\G^T)^{-1/2}\G
$
is a $m\times n$ i.r. orthogonal matrix.
\end{lem}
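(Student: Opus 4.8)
The plan is to verify the two defining properties of an i.r.o. matrix for $\A = (\G\G^T)^{-1/2}\G$: first that $\A\A^T = \Id_m$, and second that the distribution of $\A$ is invariant under left- and right-multiplication by arbitrary (deterministic, hence also random independent) orthogonal matrices, i.e. $p(\Phib\A\Thetab) = p(\A)$. First I would check row-orthogonality directly: with $\X := \G\G^T$, which is symmetric positive definite with probability one (since $m\le n$ and $\G$ has i.i.d. Gaussian entries, $\G$ has full row rank a.s.), the symmetric square-root $\X^{1/2}$ and its inverse $\X^{-1/2}$ are well-defined symmetric matrices, and
\begin{align}\nn
\A\A^T = (\G\G^T)^{-1/2}\G\G^T(\G\G^T)^{-1/2} = \X^{-1/2}\X\X^{-1/2} = \Id_m.
\end{align}
Here I would note that $\X^{-1/2}$ and $\X^{1/2}$ commute with $\X$ (all being functions of the same symmetric matrix), which is what makes the cancellation work.

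Next I would establish the bi-orthogonal invariance. Fix orthogonal $\Phib\in\R^{m\times m}$ and $\Thetab\in\R^{n\times n}$. Because $\G$ has i.i.d. $\Nn(0,1)$ entries, its distribution is invariant under left/right orthogonal multiplication: $\Phib\G\Thetab \sim \G$. The key point is that the map $\G \mapsto (\G\G^T)^{-1/2}\G$ is equivariant in the right way. For the right action, replacing $\G$ by $\G\Thetab$ leaves $\G\G^T$ unchanged since $\Thetab\Thetab^T = \Id_n$, so $((\G\Thetab)(\G\Thetab)^T)^{-1/2}(\G\Thetab) = (\G\G^T)^{-1/2}\G\Thetab = \A\Thetab$. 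For the left action, replacing $\G$ by $\Phib\G$ gives $(\Phib\G\G^T\Phib^T)^{-1/2}\Phib\G$; using the fact that for orthogonal $\Phib$ and symmetric PD $\X$ one has $(\Phib\X\Phib^T)^{-1/2} = \Phib\X^{-1/2}\Phib^T$ (immediate from the spectral decomposition, or from uniqueness of the symmetric PD square-root), this equals $\Phib\X^{-1/2}\Phib^T\Phib\G = \Phib\X^{-1/2}\G = \Phib\A$. Combining, $(\Phib\G\Thetab)(\Phib\G\Thetab)^T)^{-1/2}(\Phib\G\Thetab) = \Phib\A\Thetab$. Therefore $\Phib\A\Thetab$ is the image of $\Phib\G\Thetab$ under the same measurable map that sends $\G$ to $\A$; since $\Phib\G\Thetab \sim \G$, applying the map to both sides yields $\Phib\A\Thetab \sim \A$, which is the claimed density invariance $p(\Phib\A\Thetab) = p(\A)$.

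I do not expect a serious obstacle here — the lemma is essentially folklore (it is the statement that one obtains Haar measure on the Stiefel manifold by orthonormalizing the rows of a Gaussian matrix). The only points requiring a little care are (i) justifying that $\G\G^T$ is invertible almost surely so that $\X^{-1/2}$ is defined, and (ii) pinning down that $\X^{1/2}$ should be taken as the unique symmetric positive-definite square-root, since only for that choice does the identity $(\Phib\X\Phib^T)^{1/2} = \Phib\X^{1/2}\Phib^T$ hold and the left-invariance argument go through. Once the symmetric-PD convention is fixed, the algebra is routine. One could alternatively invoke the standard QR/polar-decomposition characterization of Haar measure on the Stiefel manifold, but the direct verification above is self-contained and short.
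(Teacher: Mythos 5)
Your proof is correct and follows essentially the same route as the paper's: verify $\A\A^T=\Id_m$ directly, then use the orthogonal invariance of the Gaussian ensemble together with the equivariance identities $((\G\Thetab)(\G\Thetab)^T)^{-1/2}\G\Thetab=\A\Thetab$ and $(\Phib\G\G^T\Phib^T)^{-1/2}\Phib\G=\Phib\A$. Your added care about almost-sure invertibility of $\G\G^T$ and the choice of the symmetric positive-definite square root only makes the argument cleaner than the paper's sketch.
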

\begin{proof}
It can be readily confirmed that \mbox{$\A\A^T=\Id_m$}. We need to prove that the distribution of $\A$ remains invariant after pre- and post- multiplication with orthogonal matrices of appropriate sizes. 
Let \mbox{$\Phib\in\R^{n\times n}$}, \mbox{$\Thetab\in\R^{m\times m}$} be any orthogonal matrices. 
First,
\mbox{
$
\A\Thetab\sim(\G\G^T)^{-1/2}\G\Thetab = ((\G\Thetab)(\G\Thetab)^T)^{-1/2}\G\Thetab.
$
}
Recall that the Gaussian distribution is invariant under orthogonal transformations, i.e. $\G\sim\G\Thetab$, to conclude from the above that $\A\Thetab\sim\A$.
Next, $\G\sim\Phib\G$. Also, it can be directly verified that $\Phib(\G\G^T)^{-1/2}\Phib$ is the inverse of a square-root of of $\Phib\G\G^T\Phi$. With these,
$
\A\sim((\Phib\G)(\Phib\G)^T)^{-1/2}\Phib\G = \Phib(\G\G^T)^{-1/2}\G =  \Phib\A.
$
\end{proof}
Next, we use Lemma \ref{lem:ir1} to write the objective function in \eqref{eq:GL} in terms of Gaussian matrices.
\begin{lem}[LASSO Objective] \label{lem:ir2}
Assume $\A\in\R^{m\times n}$ is i.r. orthogonal and $\vb\in\R^m$ is standard Gaussian, independent of each other. Then, for any $\w\in\R^n$,
\mbox{$
(\A\w - \sigma\vb) \sim (\G\G^T)^{-1/2}\G(\sigma\qb-\w),
$}
where $\G\in\R^{m\times n}$ and $\qb\in\R^{n}$ have entries i.i.d. $\Nn(0,1)$ and are independent of each other.
\end{lem}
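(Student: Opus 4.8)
The plan is to prove the distributional identity $(\A\w-\sigma\vb)\sim(\G\G^T)^{-1/2}\G(\sigma\qb-\w)$ by using Lemma~\ref{lem:ir1} to replace $\A$ with $(\G\G^T)^{-1/2}\G$, and then to absorb the noise term $-\sigma\vb$ into the Gaussian structure. After substituting, we have
\begin{align}\nn
\A\w-\sigma\vb \sim (\G\G^T)^{-1/2}\G\w - \sigma\vb,
\end{align}
so the content of the lemma is the claim that $(\G\G^T)^{-1/2}\G\w - \sigma\vb \sim (\G\G^T)^{-1/2}\G(\sigma\qb - \w)$, i.e. that the free noise $\sigma\vb\in\R^m$ can be re-expressed, up to sign, as $(\G\G^T)^{-1/2}\G(\sigma\qb)$ for an independent standard Gaussian $\qb\in\R^n$, jointly with the signal term. (The overall sign is irrelevant since $\vb\sim-\vb$.)

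\textbf{Key steps.} First I would condition on $\G$ and work with the conditional distribution over the remaining randomness. Conditioned on $\G$, the term $(\G\G^T)^{-1/2}\G\w$ is a fixed vector in $\R^m$, while $\vb$ is $\Nn(0,\Id_m)$. On the other side, conditioned on $\G$, the vector $(\G\G^T)^{-1/2}\G\qb$ is a Gaussian vector in $\R^m$ with mean zero and covariance $(\G\G^T)^{-1/2}\G\G^T(\G\G^T)^{-1/2}=\Id_m$; hence $(\G\G^T)^{-1/2}\G\qb\sim\vb$ conditionally on $\G$, and this holds jointly with the deterministic term $(\G\G^T)^{-1/2}\G\w$ because $\qb$ is independent of $\G$ and $\w$ is fixed. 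Scaling by $\sigma$ preserves this. Therefore, conditionally on $\G$,
\begin{align}\nn
\big((\G\G^T)^{-1/2}\G\w,\ \sigma\vb\big) \sim \big((\G\G^T)^{-1/2}\G\w,\ \sigma(\G\G^T)^{-1/2}\G\qb\big),
\end{align}
and taking $(\G\G^T)^{-1/2}\G\w$ minus $\sigma$ times the second coordinate gives $(\G\G^T)^{-1/2}\G(\w-\sigma\qb)\sim -(\G\G^T)^{-1/2}\G(\sigma\qb-\w)$. Since this equality of conditional laws holds for every realization of $\G$, and both sides involve $\G$ in the same way on the left and right, the unconditional laws agree as well; finally applying $\vb\sim-\vb$ (equivalently $\qb\sim-\qb$) flips the sign to match the stated form.

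\textbf{Main obstacle.} There is no deep obstacle here; the lemma is essentially a bookkeeping statement. The one point requiring a little care is the \emph{joint} distributional statement: it is not enough that $\sigma\vb$ and $\sigma(\G\G^T)^{-1/2}\G\qb$ have the same marginal law — we need them to be interchangeable \emph{jointly with} the signal term $(\G\G^T)^{-1/2}\G\w$ and, ultimately, jointly with $\G$ itself (since $\G$ reappears inside the objective when the framework is later applied). This is handled cleanly by the conditioning argument above: since $\qb\perp\G$ and the replacement law is identical for every fixed $\G$, one may construct the two sides on a common probability space sharing the same $\G$, which yields equality of the full joint laws. I would also note explicitly that $\G\G^T\succ0$ almost surely (as $m\le n$ and $\G$ has i.i.d. continuous entries), so $(\G\G^T)^{-1/2}$ is well-defined with probability one, matching the standing convention of Lemma~\ref{lem:ir1}.
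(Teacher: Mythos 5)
Your proposal is correct and follows essentially the same route as the paper: both proofs hinge on conditioning on the measurement matrix and observing that a row-orthogonal matrix applied to an independent standard Gaussian $\qb\in\R^n$ yields a standard Gaussian in $\R^m$, so that $\sigma\vb$ can be replaced by $\sigma\A\qb$ (the paper does this replacement before invoking Lemma~\ref{lem:ir1}, you do it after substituting $\A=(\G\G^T)^{-1/2}\G$, which is an immaterial reordering). Your explicit attention to the joint law and to the a.s.\ invertibility of $\G\G^T$ is a welcome tightening of points the paper's proof treats tersely.
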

\begin{proof}
Let $\A,\G,\vb,\qb$ as in the statement of the Lemma.

For any row-orthogonal $\Qb\in\R^{m\times n}$, $\vb\sim\Qb\qb$. Furthermore, provided that $\qb$ is independent of the distribution of $\Qb$, the same is then true for $\vb$. Hence, letting $\Qb=\A$, we have $(\A\w - \sigma\vb)\sim\A(\w-\sigma\qb)$. Apply Lemma \ref{lem:ir1} to conclude with the desired.
\end{proof}


\subsection{Convex Gaussian min-max Theorem}\label{sec:GMT}
We get a handle on \eqref{eq:GL} and its optimal value via analyzing a different and simpler optimization problem, which we call \emph{Auxiliary Optimization} (AO) problem, as in \cite{tight}. The machinery that allows this relies on  Gordon's Gaussian min-max theorem (GMT) \cite[Lem.~3.1]{Gor}. In fact, we require a stronger version of the GMT that can be obtained when accompanied with additional \emph{convexity} assumptions that are not present in its original formulation. The fundamental idea is attributed to Stojnic \cite{StoLASSO}. \cite{tight} builds upon this and derives a concrete and somewhat extended statement of the result in \cite[Thm.~II.1]{tight}. Please refer to \cite{tight} for a discussion.
on the GMT, the role of convexity, and, the differences between \cite[Lem.~3.1]{Gor}, \cite{StoLASSO} and \cite[Thm.~II.1]{tight}. 
 We summarize the main idea of \cite[Thm.~II.1]{tight} in the next few lines.
Let $\G\in\R^{m\times n},\g\in\R^m,\h\in\R^n$ have entries i.i.d. Gaussian; $\Sc_\ab\subset\R^n$,$\Sc_\bb\subset\R^m$ be convex compact sets, and $\psi:\Sc_\ab\times\Sc_\bb\rightarrow\R$ be convex-concave and continuous. Consider the two min-max problems in \eqref{eq:Phi} and \eqref{eq:phi} which we refer to as \emph{Primary Optimization} (PO) and \emph{Auxiliary Optimization} (AO), respectively:
\begin{align}\label{eq:Phi}
\Phi(\G) := \min_{\ab\in\Sc_\ab} \max_{\bb\in\Sc_\bb} \bb^T\G\ab + \psi(\ab,\bb),
\end{align}
\begin{align}{\label{eq:phi}
{\phi(\g,\h) := \min_{\ab\in\Sc_\ab} \max_{\bb\in\Sc_\bb} \|\ab\|\g^T\bb - \|\bb\|\h^T\ab + \psi(\ab,\bb).}}
\end{align}
Then, for any $\mu\in\R, t>0$:
$$
\Pro\left(|\Phi(\G) - \mu|>t\right) \leq 2 \Pro\left(|\phi(\g,\h) - \mu|>t\right).
$$
In words, if the optimal cost of the (PO) concentrates to some value $\mu$, the same is true for the optimal cost of the (AO). 
Assuming a setup in which the problem dimensions $m,n$ grow to infinity it is shown in \cite{tight} that if $\phi(\g,\h)$ converges in probability to deterministic value $d_*$, then, so does $\Phi(\G)$. What is more, if $\|\ab_*(\g,\h)\|$ converges to say $\alpha_*$, then under appropriate strong convexity assumptions on the objective of \eqref{eq:phi}, $\|\ab_*(\G)\|$  converges to the same value.    Here, we denote $\ab_*(\g,\h)$, $\ab_*(\G)$ for the minimizers in \eqref{eq:phi} and \eqref{eq:Phi}.

\subsection{Deriving the Auxiliary Optimization Problem}\label{sec:massage}
Using Lemma \ref{lem:ir2}, we  work with the following (probabilistically) equivalent formulation of \eqref{eq:GL}:
\begin{align}
\wh := \min_{\w\in\DS}\| (\G\G^T)^{-1/2}\G(\w - \sigma\qb) \|_2 \label{eq:GL2}
\end{align}
This brings a step closer to the framework of GMT, but not yet quite to the point that we can identify the desired format of the (PO) as described in \eqref{eq:Phi}.
The goal of this section is to complete this step.
We start by using the fact that for any $\ab\in\R^m$:
$
\|\ab\|=\max_{\|\bb\|\leq 1}{\bb^T\ab}.
$
In particular, the objective function in \eqref{eq:GL2} can be expressed as follows:
\begin{align}
& \max_{\|\bb\|\leq 1}{\bb^T(\G\G^T)^{-1/2}(\w - \sigma\qb)}=\nn
\\& \max_{\|(\G\G^T)^{1/2}\bb\|\leq 1}{\bb^T\G(\w - \sigma\qb)}= \max_{\|\G^T\bb\|\leq 1}{\bb^T\G(\w - \sigma\qb)}\nn
\end{align}
It can be checked that the above is equivalent to:
\begin{align}
\max_{\bb}\min_{\ellb}{\bb^T\G(\w - \sigma\qb-\ellb)}+\|\ell\| \nn
\end{align}
Now, we flip the order of max-min \cite[Cor.~37.3.2]{Roc70}\footnote{(i) the objective function above is continuous, convex in $\ellb$, and concave in $\bb$, (ii) the constraint sets are convex. We only need to worry about \emph{boundedness} of the constraint sets. Such steps require proper attention in general and are handled rigorously in the Appendix. 
}:
\begin{align}\nn
\wh = &\min_{\substack{\w\in\DS, \ellb}}~\max_{\substack{\bb}}\bb^T\G(\w-\sigma\qb-\ellb) +\|\ellb\|,
\end{align}
or, re-defining $\ellb:=\w-\sigma\qb-\ellb$:
\begin{align}
\wh = &\min_{\substack{\w\in\DS, \ellb}}~\max_{\substack{\bb}} \bb^T\G\ellb +\|\w-\sigma\qb-\ellb\| \label{eq:GL_app_finnn}.
\end{align}
This brings \eqref{eq:GL} in the desired format of a (PO) problem\footnote{To be precise, this requires a trivial modification of \eqref{eq:GL_app} since $\w$ does not appear in the bilinear form as in \eqref{eq:Phi}. This can be handled easily and similar extension can also be found in \cite[Lem.~5]{Foygel}. In particular, in view of \eqref{eq:Phi} identify in \eqref{eq:GL_app_fin}: $\psi( [\ellb,\w] , \bb) :=  \|\w-\qb-\ellb\|$ which is continuous and convex in $[\ellb,\w]$, as desired. Also, the constraint sets are convex. Please refer to the Appendix for compactness issues.}
,and, allows us to derive the corresponding (AO) problem:
\begin{align}
\wt(\g,\h,\qb) = &\arg\min_{\substack{\w\in\DS, \ellb}}~\max_{\substack{\bb}} \|\ellb\|\g^T\bb - \|\bb\|\h^T\ellb\nn \\&\qquad\qquad\qquad\qquad +\|\w-\sigma\qb-\ellb\| \label{eq:GO0}.
\end{align}
The rest of the proof analyzes \eqref{eq:GO0} with the goal of determining the limiting behavior of $\|\wt\|$ and is included in the Appendix. We just remark here on the assumption of the theorem that $\sigma\rightarrow 0$; this also provides a hint on the precense of the gaussian width of the tangent cone in the final result. When $\sigma\rightarrow 0$, it suffices to analyze a ``first-order approximation" to problem \eqref{eq:GO0} in which the feasible set $\DS$ is substituted by its conic hull, i.e. $\Tc_f(\x_0)$. Since the tangent cone captures the local behavior in the neighborhood of $\x_0$, the relaxation will be tight in the limit as $\|\hat\w\|_2\rightarrow 0$.  The idea is that in the limit $\sigma\rightarrow 0$, $\|\hat\ww\|$ is sufficiently small and the approximation tight. 

\bibliography{compbib}

\appendix

Here we include a detailed proof of Theorem \ref{thm:main}. In the last section, we provide a short overview of the proof of Theorem \ref{thm:main2} which follows along the same key ideas.

\subsection{Preliminaries}\label{sec:pre}
We rewrite \eqref{eq:intro_GL} in a more convenient format for the purposes of the analysis. In particular, we perform the following operations in the order in which they appear: (i) substitute $\y=\A\x_0+\sigma\qb$,  (ii) change the decision variable to the quantity of
interest, i.e. the normalized error vector $\w:=(1/\sigma)({\x-\x_0})$, (iii) move the constraint on $\w$ to the objective function by introducing a Lagrange multiplier $\la$, and, (iv) rescale by a factor of $\sigma$. Then,
\begin{align}\label{eq:GL_app}
\wh := \min_{\w}\max_{\la\geq 0}\|\A\w - \qb \|_2 + \frac{\la}{\sigma}(f(\x_0+\sigma\w)- f(\x_0)).
\end{align}

We will derive a precise expression for the limiting (as $n\rightarrow\infty$) behavior of $\lim_{\sigma\rightarrow 0}\|\wh\|_2$. Note that after the normalization of $\x-\x_0$ with $\sigma$ , it is not guaranteed that the optimal minimizer in \eqref{eq:GL_app} is bounded (think of $\sigma\rightarrow 0$). However, we will prove that in the regime of Theorem \ref{thm:main} this is indeed the case. Many of the arguments that we use in the analysis require boundedness of the constraint sets. 
To tackle this, we assume that $\wh$ is bounded by some large constant $K>0$ (with probability one over $\A,\qb$), the value of which to be chosen at the end of the analysis. Recall that at that point we will have a precise characterization of the limiting behavior of $\|\wh\|_2$, say $\alpha_*$. If $\alpha_*$ turns out to be independent on the value of $K$ which we started with, then we will assume that this starting value was strictly larger than $\alpha_*$.  Thus, in what follows, we let $K,\La,M,...$ denote such (arbitrarily) large positive quantities. 
Also, throughout the proof we write $\|\cdot\|$ instead of $\|\cdot\|_2$.

\subsection{The Primary Optimization (PO) }\label{sec:pre}
Using Lemma \ref{lem:ir2}, onwards we work with the following (probabilistically) equivalent formulation of \eqref{eq:GL_app}:
\begin{align}
\wh := \min_{\|\w\|\leq K}&\max_{\substack{\la\geq 0}}\| (\G\G^T)^{-1/2}\G(\w - \qb) \|_2 \nn
\\ &~~~~~~ +\las(f(\x_0+\sigma\w)-f(\x_0))\label{eq:GL_app_2}.
\end{align}
The goal of this section is to bring this in a format for which GMT is applicable. 
We start by using the fact that for any $\ab\in\R^m$:
$$
\|\ab\|=\max_{\|\bb\|\leq 1}{\bb^T\ab}.
$$
In particular, the first term in \eqref{eq:GL_app_2} can be expressed as follows; to shorten notation denote $\cb:=\w-\qb$:
\begin{align}
\| (\G\G^T)^{-1/2}\G\cb \| &= \max_{\|\bb\|\leq 1}{\bb^T(\G\G^T)^{-1/2}\G\cb}\nn
\\=& \max_{\|(\G\G^T)^{1/2}\bb\|\leq 1}{\bb^T\G\cb}\nn
\\
&= \max_{\|\G^T\bb\|\leq 1}{\bb^T\G\cb}\label{eq:lin0}
\\
&=\max_{\|\bb\|\leq \La}{\bb^T\G\cb} - \delta(\G^T\bb| \Bc^{n-1}),\label{eq:lin1}
\end{align}
In the last line above, $\delta(\ab | \Bc^{n-1})$ denotes the indicator function of the unit ball, i.e. takes the value $0$ if $\|\ab\|\leq 1$ and $+\infty$, otherwise. Also, we are allowed to assume that $\bb$ is bounded by some large $0\leq\La\leq\infty$, since the set of optima in \eqref{eq:lin0}   is a compact set ($\G^T$ has full column rank with probability one). 
It can be readily checked (or, see \cite{Roc70} that $\delta(\ab | \Bc^{n-1}) = \sup_{\ellb}\ab^T\ellb - \|\ellb\|$, for any $\ab\in\R^n$. Thus, continuing from \eqref{eq:lin1}:
\begin{align}
\| (\G\G^T)^{-1/2}\G\cb \| 
=\max_{\|\bb\|\leq\La}\inf_{\ellb}{\bb^T\G(\cb-\ellb)}+\|\ell\| \nn
\end{align}
As a final step, we will flip the order of max-min above. This is allowed by \cite[Cor.~37.3.2]{Roc70} since:  (i) the objective function above is continuous, convex in $\ellb$, and concave in $\bb$, (ii) the constraint sets are convex, (iii) the set constraining the maximization is bounded. Thus,
\begin{align}
\| (\G\G^T)^{-1/2}\G\cb \| 
=\inf_{\ellb}\max_{\|\bb\|\leq\La}{\bb^T\G(\cb-\ellb)}+\|\ell\| \nn.
\end{align}
We argue that the infimum above is achieved over a bounded set. Indeed, performing the maximization over $\bb$ above
$$\inf_{\ellb}\max_{\|\bb\|\leq\La}{\bb^T\G(\cb-\ellb)}+\|\ell\| = \inf_{\ellb}\La\|{\G(\cb-\ellb)}\|+\|\ell\|$$
The sub-level sets of the (continuous) objective function in the minimization on the right-hand side of the equation above are clearly bounded. Hence, by Weierstrass' Theorem \cite[Prop.~2.1.1]{Bertsekas} the set of minimum is nonempty and compact. We may thus assume there exists large but finite $N$ such that constraining the minimization over $\|\ellb\|\leq N$ does not increase the optimum. 
We may now substitute the above in \eqref{eq:GL_app_2} to conclude with:
\begin{align}
\wh = &\min_{\substack{\|\w\|\leq K \\ \|\ellb\|\leq N}}~\max_{\substack{\la\geq 0\\ \|\bb\|\leq \La }}\bb^T\G(\w-\qb-\ellb) +\|\ellb\| \nn
\\ &\quad\quad\quad\qquad\qquad+\las(f(\x_0+\sigma\w)-f(\x_0))\nn.
\end{align}
or, re-defining $\ellb:=\w-\qb-\ellb$ and appropriately adjusting $N$:
\begin{align}
\wh = &\min_{\substack{\|\w\|\leq K \\ \|\ellb\|\leq N}}~\max_{\substack{\la\geq 0\\ \|\bb\|\leq \La }}\bb^T\G\ellb +\|\w-\qb-\ellb\| \nn
\\ &\quad\quad\quad\qquad\qquad+\las(f(\x_0+\sigma\w)-f(\x_0))\label{eq:GL_app_fin}.
\end{align}
This brings \eqref{eq:GL_app} in the desired format for the application of GMT.
 In particular, identify $\psi( [\ellb,\w] , \bb) :=  \|\w-\qb-\ellb\|  + \max_{\la\geq 0}\las(f(\x_0+\sigma\w)-f(\x_0)) $ which is continuous and convex in $[\ellb,\w]$, as desired. This format is of course the same as in \eqref{eq:GL_app_finnn}, modulo the boundedness constraints which were not regarded in the main body of the paper.


\subsection{The Auxiliary Optimization (AO) for arbitrary $\sigma$}\label{sec:gor}
Let us write the (AO) problem as it corresponds to \eqref{eq:GL_app_fin}:
\begin{align}
&\wt(\g,\h,\qb) = \min_{\substack{\|\w\|\leq K \\ \|\ellb\|\leq N}}~\max_{\substack{\la\geq 0\\ \|\bb\|\leq \La }} \|\ellb\|\g^T\bb - \|\bb\|\h^T\ellb  \nn
\\ &~~~~+\|\w-\qb-\ellb\|+\las(f(\x_0+\sigma\w)-f(\x_0))\label{eq:GO1}.
\end{align}
Our goal in the rest of the section is to simplify \eqref{eq:GO1}. By massaging the objective functions and performing minimizations/maximizations when possible we eventually reach to an equivalent formulation, in which most optimizations are in terms of scalar variables instead of vectors. Two remarks are in place:

\noindent(a) We will need to flip the order of min-max several times; except if stated differently we apply \cite[Cor.~37.3.2]{Roc70}: here, constraint sets will always be convex and the objective function continuous. We only need to worry about convexity of the objective and boundedness of (at least one of) the constraint sets. 

\noindent(b) To keep notation short, we will often drop the set constraints over the optimization variables when clear from context. Recall that most of the constraints are just boundedness constraints by constants that can be chosen large.

\subsubsection{Maximizing over the direction of $\bb$}
This is easy to perform, note that $\max_{\|\bb\|=\beta}\g^T\bb=\beta\|\g\|_2, \beta\geq 0$.

\subsubsection{Minimizing over $\ellb$} First, let us argue briefly that we can ``push" the minimization over $\ellb$ on the right of the maximization: (i) it can be seen that after optimizing over the direction of $\bb$, the objective function in \eqref{eq:GO1} is convex in $\ellb$ (i) it is also concave in $\la,\beta$, and (iii) $\ellb$ is constrained in a bounded set. 

To be able to optimize over $\ellb$, we use the following trick. We will express the terms $\|\ellb\|$ and $\|\w-\qb-\ellb\|$ using the fact that:
\begin{align}\label{eq:sqrtx}
\sqrt{x} = \min_{p\geq 0} \frac{x}{2p} + \frac{p}{2}, \quad\forall x>0.
\end{align}
Also, note that the set of minima above is clearly bounded for bounded $x$. With these,
\begin{align*}
&\min_{\ellb} \beta(\|\ellb\|\|\g\|-\h^T\ellb) + \|\w-\qb-\ellb\| = \\ &\min_{\substack{0\leq p \leq P \\ 0\leq t \leq T}} \frac{p+t}{2} +
\frac{1}{2p}\|\qb-\w\|^2+ \\ 
&~~~~\min_{\ellb} \frac{1}{p}\left[\frac{t+\beta^2 p\|\g\|^2}{2t}\|\ellb\|^2 + (-\beta p \h + \qb - \w)^T\ellb \right],
%
\end{align*}
and the minimization over $\ellb$ contributes the term:
$$
-\frac{1}{2p}\frac{t}{t+\beta^2 p\|\g\|^2}\|-\beta p \h + \qb - \w\|^2.
$$

\subsubsection{Linearize $f$}
$f$ is continuous and convex, thus, we can express it in terms of its convex conjugate $f^*(\ub) = \sup_{\x}\ub^T\x - f(\x)$. In particular, applying \cite[Thm.12.2]{Roc70} we have $f(\x_0+\sigma\w)=\sup_{\ub}\x_0^T\ub + \sigma\ub^T\w-f^*(\ub)$. The supremum here is achieved at $\ub_*\in\partial f(\x_0+\sigma\w)$ \cite[Thm.~23.5]{Roc70}. Also, from \cite[Prop.~4.2.3]{Bertsekas}, $\cup_{\|\w\|\leq K}\partial f(\x_0+\sigma\w)$ is bounded. Thus, the set of maximizers $\ub_*$ is bounded and for some $0< M:=M(K) <\infty$, $\wt$ is given as the solution to
\begin{align}
&\max_{\substack{\la\geq 0\\ 0\leq\beta\leq \La \\ \|\ub\|\leq M }}\min_{\w,p,t} \frac{p+t}{2} + 
\frac{1}{2p}\|\qb-\w\|^2 + \la\ub^T\w  \nn
\\ &~~~~~-\frac{1}{2p}\frac{t}{t+\beta^2 p\|\g\|^2}\|-\beta p \h + \qb - \w\|^2
+ \las F(\ub)\label{eq:GO3},
\end{align}
where we have flipped the orders of min-max for $\w$ and $\ub$, and have denoted 
$$
F(\ub):= \ub^T\x_0-f^*(\ub)-f(\x_0).
$$

\subsubsection{Redefine variables} It will be convenient for the calculations to follow to redefine the variables $\beta$ and $t$ as follows:
$$
\beta := \beta p ,\quad t := t p \quad\text{ and }\quad \la:=\la p.
$$
It can be checked that with these changes, the optimization remains convex. 

\subsubsection{Minimizing over the direction of $\w$} 
Evaluating the squares in \eqref{eq:GO3} and after some algebra, it can be shown that the terms in which $\w$ appears are as follows:
\begin{align}\label{eq:dirw}
\frac{\beta^2\|\g\|^2}{2(\beta^2\|\g\|^2+t)}\|\w\|^2 - ({\tilde\fb} - \la\ub)^T\w, 
\end{align}
where 
\begin{align}\label{eq:f}
{\tilde\fb} := \left( -\frac{\beta t}{\beta^2\|\g\|^2+t} \h + \frac{\beta^2\|\g\|^2}{\beta^2\|\g\|^2+t}\qb \right),
\end{align}
which has entries i.i.d. Gaussians of zero mean and standard deviation 
\begin{align}\label{eq:sf}
\sigma_{\tilde\fb}:=\sigma_{\tilde\fb}(\beta,t) := \frac{\beta\sqrt{t^2+\beta^2\|\g\|^4}}{\beta^2\|\g\|^2+t}.
\end{align}
Fix the norm of $\|\w\|=\alpha$. Optimizing over the direction of $\w$ the second term in \eqref{eq:dirw} gives  $-\alpha\|\tilde\f-\la\ub\|$.

\subsubsection{Minimize over $p$}
Overall, the min-max problem in \eqref{eq:GO1} has reduced itself to:
\begin{align}
&\max_{\substack{\la\geq 0\\ 0\leq\beta\leq \La \\ \|\ub\|\leq M }}\min_{\alpha,p,t}\Big\{
\frac{1}{2p}\Big( t +  \|\qb\|^2 - \frac{t}{\beta^2\|\g\|^2+t}\|\beta\h-\qb\|^2 + \nn \\ 
&\quad\frac{\beta^2\|\g\|^2}{\beta^2\|\g\|^2+t}\alpha^2 -2\alpha \|{\tilde\fb} - \la\ub\|  + 2\las F(\ub) \Big)+ \frac{p}{2}  \Big\}\nn =\\
&\max_{\substack{\la\geq 0\\ 0\leq\beta\leq \La \\ \|\ub\|\leq M }}\min_{\alpha,t}
\Big( t +  \|\qb\|^2 - \frac{t}{\beta^2\|\g\|^2+t}\|\beta\h-\qb\|^2 + \nn \\ 
&\quad\frac{\beta^2\|\g\|^2}{\beta^2\|\g\|^2+t}\alpha^2 -2\alpha \|{\tilde\fb} - \la\ub\|  + 2\las F(\ub) \Big)^{1/2}.
\label{eq:GO4}
\end{align}
In yielding the equality above, we have applied \eqref{eq:sqrtx}.

\subsubsection{Redifine $\la$}
It is convenient to redefine $\la$ as 
$
\la:=\la/\sigma_{\tilde\fb}.
$
Let $\fb$ denote standard  i.i.d. Gaussian vector, such that $\tilde\fb\sim\sigma_{\tilde\fb}\fb$. With these, we can express $\wt$ as the solution to:
\begin{align}&
\max_{\substack{\la\geq 0\\ 0\leq\beta\leq \La \\ \|\ub\|\leq M }}\min_{\alpha,t}
\Big( t +  \|\qb\|^2 - \frac{t}{\beta^2\|\g\|^2+t}\|\beta\h-\qb\|^2 + \nn \\ 
&\quad\frac{\beta^2\|\g\|^2}{\beta^2\|\g\|^2+t}\alpha^2 -2\sigma_{\tilde\fb} (\alpha \|{\fb} - \la\ub\|  - 2\las F(\ub) ) \Big).
\label{eq:GO5}
\end{align}
Note that we have essentially considered the square of \eqref{eq:GO5}. Let us denote the optimal cost of \eqref{eq:GO5} above as $\phi(\sigma):=\phi(\sigma;\g,\h,\qb,\fb)$.

 \subsection{The Auxiliary Optimization in the limit $\sigma\rightarrow 0$}

\cite[Thm.~II.2]{tight} relates $\|\wt\|$ to $\|\wh\|$, under appropriate assumptions. Also, recall that  we wish to characterize $\lim_{\sigma\rightarrow 0}\|\wh\|$. Thus, in view of \eqref{eq:GO5} we wish to analyze the problem 
$$
\phiz:=\phiz(\g,\h,\qb,\fb):=\lim_{\sigma\rightarrow 0}\phi(\sigma;\g,\h,\qb,\fb).
$$
 In \eqref{eq:GO5}, from Fenchel's inequality:
\begin{align}\label{eq:conj}
F(\ub) = \ub^T\x_0 - f^*(\ub)-f(\x_0)\leq 0.
\end{align}
With this observation, we prove in the next lemma that $\phi(\sigma;\g,\h,\qb,\fb)$ is non-decreasing in $\sigma$.
\begin{lem}\label{lem:decreasing}
Fix $\g,\h,\qb,\fb$ and consider $\phi(\cdot;\g,\h,\qb,\fb):(0,\infty)\rightarrow\R$ as defined in \eqref{eq:GO5}. $\phi(\sigma;\g,\h,\qb,\fb)$ is non-decreasing in $\sigma$.
\end{lem}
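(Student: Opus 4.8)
The plan is to confine the $\sigma$-dependence of the optimal cost \eqref{eq:GO5} to a single term, to show that this term is monotone in $\sigma$ for every fixed value of the remaining optimization variables, and then to observe that such pointwise monotonicity is preserved by the outer maximization and the inner minimization. Concretely, inspecting \eqref{eq:GO5} --- obtained after linearizing $f$ and performing the successive rescalings of Section~\ref{sec:gor} --- the scalar $\sigma$ enters the objective \emph{only} through $\las=\la/\sigma$, namely in a single summand that is a positive multiple of $\sigma_{\tilde\fb}\las F(\ub)=\sigma_{\tilde\fb}\,\la\,F(\ub)/\sigma$; the rest of the objective --- built from $\|\qb\|$, $\|\g\|$, $\|\h\|$, $\alpha$, $\beta$, $t$, the quantity $\sigma_{\tilde\fb}(\beta,t)$ of \eqref{eq:sf}, and the Gaussian vector $\fb$ --- together with all the feasible sets $\{\la\ge0\}$, $\{0\le\beta\le\La\}$, $\{\|\ub\|\le M\}$ and the (bounded) ranges of $\alpha$ and $t$, is free of $\sigma$.

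The key step is the sign of $F$: by Fenchel's inequality \eqref{eq:conj}, $F(\ub)=\ub^T\x_0-f^*(\ub)-f(\x_0)\le0$ for every $\ub$, and since $\sigma_{\tilde\fb}\ge0$ and $\la\ge0$ on the feasible set, the constant $\sigma_{\tilde\fb}\la F(\ub)$ is nonpositive, so $\sigma\mapsto\sigma_{\tilde\fb}\la F(\ub)/\sigma$ is non-decreasing on $(0,\infty)$. Writing the objective of \eqref{eq:GO5} as $\Psi_\sigma(\la,\beta,\ub,\alpha,t)$, it follows that $\Psi_\sigma(\la,\beta,\ub,\alpha,t)\le\Psi_{\sigma'}(\la,\beta,\ub,\alpha,t)$ pointwise in all five variables whenever $0<\sigma\le\sigma'$. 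Taking $\min$ over $(\alpha,t)$ preserves this inequality, and so does the subsequent $\max$ over $(\la,\beta,\ub)$; hence $\phi(\sigma)=\max_{\la,\beta,\ub}\min_{\alpha,t}\Psi_\sigma\le\max_{\la,\beta,\ub}\min_{\alpha,t}\Psi_{\sigma'}=\phi(\sigma')$, which is the claim.

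I do not expect a genuine obstacle here: the argument reduces to the structural fact that $\sigma$ appears in \eqref{eq:GO5} only as $1/\sigma$ multiplied by $F(\ub)$, combined with the single sign fact $F\le0$. The one point demanding care is that the reductions of Section~\ref{sec:gor} (the min--max swaps via \cite[Cor.~37.3.2]{Roc70}, the introduction of $p$ and $t$ via \eqref{eq:sqrtx}, and the redefinitions of $\la$, $\beta$, $t$) be carried out with the large constants $K,\La,M,N,P,T$ chosen uniformly over the bounded range of $\sigma$ under comparison --- the only delicate one being $M$, which is admissible because $\cup_{\|\w\|\le K}\partial f(\x_0+\sigma\w)$ stays inside a fixed ball for $\sigma$ in a compact interval --- so that \eqref{eq:GO5}, and hence the isolation of the $\sigma$-dependence, holds simultaneously at the two values of $\sigma$ being compared.
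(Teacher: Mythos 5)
Your proof is correct and takes essentially the same route as the paper's: both hinge on the observation that, after the linearization of $f$, the parameter $\sigma$ enters \eqref{eq:GO5} only through the term proportional to $\sigma_{\tilde\fb}\,\la F(\ub)/\sigma$, which is nonpositive by Fenchel's inequality \eqref{eq:conj}, so the objective is pointwise non-decreasing in $\sigma$. Your way of transferring this to the optimal value --- monotonicity of $\min$ and $\max$ under pointwise domination --- is if anything slightly cleaner than the paper's chain of inequalities through particular optimizers $(\alpha^{(2)},t^{(2)},\beta^{(1)},\ub^{(1)},\la^{(1)})$, whose last step implicitly uses the saddle-point property of \eqref{eq:GO5} at $\sigma_2$.
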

\begin{proof}
Denote $\Lc(\sigma,\alpha,t,\beta,\ub,\la)$ the objective function in \eqref{eq:GO5} and consider $0<\sigma_1<\sigma_2<\infty$. Let $\alpha^{(2)},t^{(2)}$ be an optimal solution to the min-max problem in \eqref{eq:GO5} for $\sigma_2$. Then, let $(\beta^{(1)},\ub^{(1)},\la^{(1)}) = \arg\max_{\beta,\ub,\la}\Lc(\sigma_1,\alpha^{(2)},t^{(2)},\beta,\ub,\la)$. Clearly,
$$
\phi(\sigma_1)\leq \Lc(\sigma_1,\alpha^{(2)},t^{(2)},\beta^{(1)},\ub^{(1)},\la^{(1)}).
$$
Using $1/\sigma_1>1/\sigma_2$ and  \eqref{eq:conj},
\begin{align*}
& \Lc(\sigma_1,\alpha^{(2)},t^{(2)},\beta^{(1)},\ub^{(1)},\la^{(1)}) \leq  \\ &\qquad\qquad\qquad\quad\Lc(\sigma_2,\alpha^{(2)},t^{(2)},\beta^{(1)},\ub^{(1)},\la^{(1)})
\end{align*}
But,
$$
\Lc(\sigma_2,\alpha^{(2)},t^{(2)},\beta^{(1)},\ub^{(1)},\la^{(1)}) \leq \phi(\sigma_2).
$$
Combine the above chain of inequalities to  conclude.
\end{proof}

%
In particular, when viewed as a function of $\kappa:=1/\sigma$, $\phi(\cdot;\g,\h,\qb,\f)$ is non-increasing. Thus,
 \begin{align}\label{eq:liminf}
 \phiz = \lims\phi(\sigma)= \lim_{\kappa\rightarrow\infty}\phi(\kappa)=\inf_{\kappa\geq 0}\phi(\kappa),
 \end{align}
Next, we argue that we can flip the order of min-max.
 The objective function in \eqref{eq:GO5} is continuous,  convex in $\kappa$, and, concave in $\la,\beta,\ub$. The constraint set on $\la$ appears to be unbounded, but, it can be checked from \eqref{eq:GO5} that the optimal value is in fact bounded. With this and \eqref{eq:liminf}, we get
%
\begin{align}&
\max_{\substack{\la\geq 0\\ 0\leq\beta\leq \La \\ \|\ub\|\leq M }}\min_{\alpha,t}\inf_{\kappa\geq 0}
\Big( t +  \|\qb\|^2 - \frac{t}{\beta^2\|\g\|^2+t}\|\beta\h-\qb\|^2 + \nn \\ 
&\quad\frac{\beta^2\|\g\|^2}{\beta^2\|\g\|^2+t}\alpha^2 -2\sigma_{\fb} \alpha \|{\tilde\fb} - \la\ub\|  + \kappa 2\sigma_{\tilde\fb} \la F(\ub)  \Big).
\nn
\end{align}
Recall \eqref{eq:conj} and the fact that equality is achieved iff $\ub\in\paf$ (e.g. \cite[Thm.~23.5]{Roc70}). Then, $\phi_0$ is given by
\begin{align}
&\max_{\substack{\la\geq 0\\ 0\leq\beta\leq \La \\ \ub\in\paf }}\min_{\alpha,t}
\Big( t +  \|\qb\|^2 - \frac{t}{\beta^2\|\g\|^2+t}\|\beta\h-\qb\|^2 + \nn \\ 
&\qquad\qquad\qquad\quad\frac{\beta^2\|\g\|^2}{\beta^2\|\g\|^2+t}\alpha^2 -2\sigma_{\fb} \alpha \|{\tilde\fb} - \la\ub\|   \Big). \nn
\end{align}
where we have assumed $\infty>M>\max_{\s\in\paf}\|\s\|$.
We can now optimize over $\la\ub$ (after appropriately flipping the order of min-max): $\min_{\la\geq 0, \ub\in\paf}\|\fb-\la\ub\| =  \dt(\f,\cone(\paf))$. Thus, we conclude with the (AO) for $\sigma\rightarrow 0$ taking the form:
\begin{align}
&\phi_0(\g,\h,\qb,\f) = \min_{\substack{0\leq\alpha\leq K}} \Lc(\alpha;\g,\h,\qb,\f), \label{eq:goropt}\\
&\Lc(\alpha;\g,\h,\qb,\f) := \min_{ 0\leq t\leq T} \max_{\substack{0\leq\beta\leq \La}}\Big\{ t +  \|\qb\|^2 -  \nn \\ 
&~\frac{t}{\beta^2\|\g\|^2+t}\|\beta\h-\qb\|^2 + \frac{\beta^2\|\g\|^2}{\beta^2\|\g\|^2+t}\alpha^2 -2\sigma_{\tilde\fb} \alpha \dthc \Big\}, \nn
\end{align}
where we have denoted $\dthc:=\dt(\f,\cone(\paf)) $.


It is now easy to optimize $\eqref{eq:goropt}$ over $\alpha$. We summarize the result in the following lemma.

\begin{lem}\label{lem:det}
In \eqref{eq:goropt}, fix $\g,\h,\qb$ and let $\wt:=\wt(\g,\h,\qb)$ be optimal. Denote,
$$
{\tilde\fb}:={\tilde\fb}(\beta,t) :=  -\frac{\beta t}{\beta^2\|\g\|^2+t} \h + \frac{\beta^2\|\g\|^2}{\beta^2\|\g\|^2+t}\qb ,
$$
\mbox{and $\upsilon(\beta,t):=\dt(\tilde\fb,\cone(\la\paf))$}.
 Then, 
\begin{align}\label{eq:wopt}
\|\wt\| = \frac{{\beta^2\|\g\|^2+t}}{\beta\|\g\|^2}\upsilon(\beta,t),
\end{align}
where $\beta,t$ are optimal solutions to the following optimization:
\begin{align}
&\max_{\substack{\La\geq\beta\geq 0}}\min_{T\geq t\geq 0}
\Big( t +  \|\qb\|^2 - \frac{t}{\beta^2\|\g\|^2+t}\|\beta\h-\qb\|^2 + \nn \\ 
&\quad\qquad-\frac{t^2+\beta^2\|\g\|^4}{\|\g\|^2(\beta^2\|\g\|^2+t)}\upsilon(\beta,t)\Big).
\label{eq:goropt}
\end{align}
\end{lem}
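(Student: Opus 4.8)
This last statement is a deterministic reduction of \eqref{eq:goropt}: no randomness is left, and the content is simply that the remaining minimization over the scalar $\alpha=\|\w\|$ can be carried out in closed form. The plan is to recognize that, for every fixed $\beta$ and $t$, the bracketed objective in \eqref{eq:goropt} is an honest one dimensional convex quadratic in $\alpha$,
\[
\frac{\beta^2\|\g\|^2}{\beta^2\|\g\|^2+t}\,\alpha^2 \;-\; 2\,\sigma_{\tilde\fb}(\beta,t)\,\dthc\,\alpha \;+\; C(\beta,t),
\]
where $\dthc=\dt(\fb,\cone(\paf))$ is a constant (independent of $\beta,t$) and $C(\beta,t):=t+\|\qb\|^2-\tfrac{t}{\beta^2\|\g\|^2+t}\|\beta\h-\qb\|^2$ gathers the $\alpha$-free terms. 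Minimizing $A\alpha^2-2B\alpha+C$ with $A>0$ over $\alpha\ge 0$ is elementary: the minimizer is $\alpha^{\star}=B/A$ (nonnegative, since $A,B\ge 0$) and the minimum value is $C-B^2/A$. Since $\alpha$ was introduced as $\|\w\|$, identifying $\alpha^{\star}$ with $\|\wt\|$, inserting the explicit $\sigma_{\tilde\fb}$ from \eqref{eq:sf}, and using that conic distance is positively homogeneous---so that $\upsilon(\beta,t)=\sigma_{\tilde\fb}\,\dt(\fb,\cone(\paf))=\sigma_{\tilde\fb}\dthc$ because $\tilde\fb=\sigma_{\tilde\fb}\fb$---turns $\alpha^{\star}=B/A$ into \eqref{eq:wopt}, while $C-B^2/A$ becomes exactly the bracket of the reduced $(\beta,t)$ optimization in the statement.

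To make this legitimate I would first push $\min_\alpha$ next to the bracket. We have $\phi_0=\min_{\alpha,t}\max_{\beta}\{\cdots\}$; the objective is jointly convex in $(\alpha,t)$ (as recorded after the ``Redefine variables'' step), concave in $\beta$, continuous, and the constraint boxes $0\le\alpha\le K$, $0\le t\le T$, $0\le\beta\le\La$ are convex and compact, so \cite[Cor.~37.3.2]{Roc70} yields $\phi_0=\max_\beta\min_{t}\min_{\alpha}\{\cdots\}$ together with the existence of a saddle point $(\alpha^{\star},t^{\star},\beta^{\star})$. Then $\alpha^{\star}$ minimizes the quadratic above at $(\beta,t)=(\beta^{\star},t^{\star})$, and $(\beta^{\star},t^{\star})$ solves the reduced max-min, which is precisely what the lemma asserts. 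Two boundary points each need a line: by the Appendix's blanket convention I would take $K$ larger than $B/A$, so the constraint $\alpha\le K$ is inactive and $\alpha^{\star}=B/A$ is the constrained minimizer; and I would check $\beta^{\star}>0$ (at $\beta=0$ one has $\sigma_{\tilde\fb}=0$, the $\alpha$-term vanishes, and this is readily seen to be suboptimal), so the leading coefficient $A=\|\g\|^2\beta^2/(\|\g\|^2\beta^2+t)$ evaluated at $(\beta^{\star},t^{\star})$ is strictly positive and $\alpha^{\star}$ is well defined. Attainment of $(\beta^{\star},t^{\star})$ follows from Weierstrass on the compact boxes, using that $(\beta,t)\mapsto\upsilon(\beta,t)$ is continuous: $(\beta,t)\mapsto\tilde\fb(\beta,t)$ is continuous and metric projection onto the fixed closed convex cone $\cone(\paf)$ is nonexpansive.

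The only genuine work is this bookkeeping---the min-max interchange and the verification that the artificial bounds $K,T,\La$ are inactive at the optimum, which is the recurring technical theme of this Appendix; the algebra reducing $B/A$ and $C-B^2/A$ to the displayed expressions via \eqref{eq:sf} is routine. I expect the interchange-and-compactness step to be the main obstacle, since it is where convexity of the (already massaged) objective and boundedness of the feasible sets must be invoked with care; once granted, the lemma follows immediately.
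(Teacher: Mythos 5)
Your proposal is correct and is exactly the paper's (essentially one-line) argument: the paper merely asserts that it is ``easy to optimize \eqref{eq:goropt} over $\alpha$,'' and the content of that step is precisely your minimization of the convex quadratic $A\alpha^2-2B\alpha+C$ with $A=\beta^2\|\g\|^2/(\beta^2\|\g\|^2+t)$ and $B=\sigma_{\tilde\fb}\dthc=\upsilon(\beta,t)$, together with the same interchange-of-min-max and boundedness bookkeeping that the Appendix invokes throughout. One remark: your $\alpha^\star=B/A$ equals $\frac{\beta^2\|\g\|^2+t}{\beta^2\|\g\|^2}\,\upsilon(\beta,t)$ and $C-B^2/A$ carries $\dthc^2$ rather than $\upsilon$, which differ from the printed \eqref{eq:wopt} and the printed reduced objective by a factor of $\beta$ (resp.\ a power of the distance term); your expressions are the ones consistent with the paper's final limit $\|\wt\|^2/\sigma^2\to\Delxf(n-\Delxf)/(m-\Delxf)$, so the discrepancy is a typo in the lemma statement, not an error in your derivation.
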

Note that $\tilde\fb\sim\sigma_{\tilde{\fb}}\fb$ where $\fb$ is standard i.i.d. Gaussian and
\begin{align*}
\sigma_{\tilde\fb}:=\sigma_{\tilde\fb}(\beta,t) := {\beta\sqrt{t^2+\beta^2\|\g\|^4}}/({\beta^2\|\g\|^2+t}).
\end{align*}

\subsection{Probabilistic Analysis}\label{sec:ana}
Lemma \ref{lem:det} derives an expression for $\|\wt\|$, for fixed $\g,\h,\qb$. Here, we evaluate the limiting behavior of this expression. Recall that $\g,\h,\qb$ are all i.i.d. standard Gaussian vectors and assume the large-system limit linear regime as in the statement of Theorem \ref{thm:main}. We use the following notation: let $\{X_n\}_{n=1}^\infty$ be sequence of random variables and $\{c_n\}$ a deterministic sequence, then $X_n\rP c_n$ iff for all $\eps>0$, the event $|X_n-c_n|\leq\eps c_n$ occurs w.p. 1 in the limit $n\rightarrow\infty$. For the purpose of this section, convergence is to be understood in the aforementioned meaning.

From standard concentration results on Gaussian r.v.s.: $\|\g\|^2\rP m$, $\|\h\|^2\rP n$, $\|\qb\|^2\rP n$, $\|\beta\h-\sigma\qb\|^2\rP (\beta^2+\sigma^2)n$, and $\ome\rP\sigma_{\tilde\fb}\DelxfR$. For the last relation, we have used the property of the gaussian width as in\cite[Prop.~10.1]{TroppEdge}. Hence, for any fixed $\beta,t$, the objective function in \eqref{eq:goropt} converges to 
\begin{align}\label{eq:uni}
d(\beta,t) =  t +  \frac{\beta^2(m-t)}{\beta^2m+t} n - \frac{t^2+\beta^2m^2}{m(\beta^2m+t)}\Delxf
\end{align}
It can be checked that the objective function in \eqref{eq:goropt} is convex in $t$ and concave in $\beta$. Also, the constraint sets are compact. Thus, it follows from \cite[Cor.~II.1]{AG1982} ( ``point-wise convergence in probability of concave functions implies uniform convergence in compact spaces" ) that the convergence in \eqref{eq:uni} is uniform over $\beta$ and $t$. As will be shown next, provided that the constants determining the constraint sets are large enough, then there exist unique $\beta_*^2$ and $t_*$ that are optimal in \eqref{eq:uni}. Hence, as in  \cite[Thm.~2.7]{NF36},  the optimal solutions of \eqref{eq:goropt} indeed converge to the deterministic solutions of \eqref{eq:uni}, which we calculate below. Let the constant bounds on the variables $\beta,t$, namely $\La,T$, to be specified later. Denote $\beta_*,t_*$ optimal solutions in
$$
\max_{0\leq\beta\leq B}\min_{0\leq t\leq T} d(\beta, t).
$$
Let us write $\omega:=\DelxfR$. We differentiate the objective with respect to both $\beta$ and $t$ to find:
\begin{subequations}
\begin{align}
&\frac{\partial d(\beta_*,t_*)}{\partial\la} = 1 - \frac{\beta_*^2m(\beta_*^2-1)}{(t_*+\beta_*^2m)}n - \frac{t_*^2+2t_*\beta_*^2m-m^2\beta_*^2}{(\beta_*^2m+t_*)^2}\frac{\omega^2}{m}\label{eq:la},\\
&\frac{\partial d(\beta_*,t_*)}{\partial\beta} = \frac{2\beta_*t_*(m-t_*)}{(\beta_*^2m+t_*)^2}(n-\omega^2) \label{eq:beta}.
\end{align}
\end{subequations}
Setting them to zero, from \eqref{eq:beta} we have $\beta_*=0$, $t_*=0$ or $t_*=\sigma m$. We consider each case separately. Assume $\beta_*=0$, then $t_* = \arg\min d(\la,\beta_*) = \arg\min \la - \la\frac{\omega^2}{m} = 0$ and $d(t_*,\beta_*)=0$. Next, suppose $t_* =  m$.  Substituting this in \eqref{eq:la} we find
\begin{align}
(n-m)\beta_*^4 + ((n-m)-(m-\omega^2))\beta_*^2 - (m-\omega^2) = 0.\nn
\end{align} 
Solving this, yields $\beta_*^2 = \frac{m-\omega^2}{n-m}$ and $d(\beta_*,t_*)=m-\omega^2>0$. Choose, $\La,T$ such that $\beta_*,t_*$ are feasible. Form convexity, first-order optimality conditions are sufficient.  

What is left is to substitute those limit values $\beta_*$ and $t_*$ in \eqref{eq:wopt} in Lemma \ref{lem:det}, to conclude with
$$
{\|\wt\|^2}/{\sigma^2}\rP{\Delxf(n-\Delxf)}/{m-\Delxf}.
$$

\subsection{Proof Outline of Theorem \ref{thm:main2}}\label{sec:malakizomai}
In the next few lines we outline only the main checkpoints involved in the proof of Theorem \ref{thm:main}. The analysis follows along the same lines as in Sections \ref{sec:pre}, \ref{sec:gor} and \ref{sec:ana} for the proof of Theorem \ref{thm:main}. In fact, things here are less involved since we are only interested in lower bounding the optimal cost of a min-max problem, and don't care about its optimal values. Hence, a single application of GMT, and not the framework of \cite{StoLASSO,tight} requires to be employed.  A detailed proof will be included in the future extended version of the paper.

Denote, $\Cc:=\Tc_f(\x_0)\cap\Sc^{n-1}$. We write the $\mcsv$ of $\A$ as
\begin{align}\label{eq:start}
\smin = \min_{\x\in\Cc} \max_{\|\y\|\leq 1} \y^T\A\x.
\end{align}
We prove a high-probability lower bound on the optimal cost of this min-max optimization. We do so by applying Gordon's GMT, just as is done in the Gaussian case. But first, we need to bring \eqref{eq:start} in a format where GMT is applicable
After replicating the ideas of Section \ref{sec:pre} and applying GMT, it can be shown that it suffices to lower bound the optimal cost of the following (AO) problem instead:
\begin{align}\label{eq:start2}
\min_{\x\in\Cc,\ellb} \max_{B\geq \beta\geq 0} \|\x-\ellb\| + \beta(\|\ellb\|\|\g\|-\h^T\ellb).
\end{align}
Next, as in Section \ref{sec:gor} we perform a deterministic (fixed $\g,\h$) analysis of this to simplify it as possible into a scalar optimization problem. Only caution should be taken here that the constraint set $\Cc$ on $\x$ is non-convex, thus we are not allowed to flip min-max operations ``carelessly". It can be shown that \eqref{eq:start2} has optimal cost $\sqrt{F}$, where $F:=F(\g,\h)$ is the optimal cost of the following optimization:
\begin{align}\label{eq:start2}
\min_{\x\in\Cc,T\geq t\geq 0} \max_{B\geq \beta\geq 0} \frac{\|\g\|^2-t\beta^2\|\h\|^2-2t\beta \h^T\x + t\beta^2\|\g\|^2 + t^2\beta^2}{\|\g\|^2+t}.
\end{align}
After applying the min-max inequality (e.g. \cite[Lemma~36.1]{Roc70}) it is easy to optimize over $\x$ by choosing it to maximize $\h^T\x$ in $\Cc$ and $F$ is the optimal cost to only a scalar optimization problem involving the r.v.s. $\|\g\|$, $\|\h\|$ and $\max_{\x\in\Cc}\h^T\x$. All three, are 1-Lipschitz functions, thus, they concentrate (thus, converge in the proportional regime) to their mean values $\sqrt{m}$, $\sqrt{n}$ and $\DelxfR$ respectively. Also, the problem is convex in $\beta$, t, thus we can yield the expression of Theorem \ref{thm:main2} (with the correspondence $\beta\leftrightarrow \chi$, $t\leftrightarrow\rho$), by first-order optimality conditions.

\end{document}